		\newcommand\FINAL[1]{} 

\providecommand\apxparameter{appendix=inline}

\documentclass[11pt,a4paper]{article}
\pdfoutput=1

\usepackage[T1]{fontenc}
\usepackage[utf8]{inputenc}
\usepackage[UKenglish]{babel}
\usepackage{lmodern}
\usepackage{microtype}
\usepackage[margin=2.5cm]{geometry}
\usepackage{amsmath,amssymb,amsthm}
\usepackage{graphicx}
\usepackage{xcolor}
\usepackage{url}
\usepackage{tikz}
\usepackage{versions}

\usepackage[\apxparameter]{apxproof}
\newtheoremrep{theorem}{Theorem}
\newtheoremrep{claim}[theorem]{Claim}
\newtheoremrep{proposition}[theorem]{Proposition}
\newtheoremrep{lemma}[theorem]{Lemma}
\newtheoremrep{corollary}[theorem]{Corollary}
\newtheoremrep{openproblem}[theorem]{Open Problem}
\newtheorem{observation}[theorem]{Observation}
\theoremstyle{definition}
\newtheorem{definition}[theorem]{Definition}
\theoremstyle{plain}
\newtheorem{property}[theorem]{Property}
\newtheorem{problem}{Problem}
\newenvironment{claimproof}{\begin{proof}[Proof of the claim]}{\end{proof}}

\graphicspath{{./figures/}}
\title{Emergency Vertex Cover}
\author{Eric Angel\thanks{Universit\'e Paris-Saclay, Univ Evry, IBISC, France. \texttt{eric.angel@univ-evry.fr}}
\and Evangelos Bampas\thanks{Universit\'e Paris-Saclay, CNRS, LISN, 91400 Orsay, France. \texttt{bampas@lisn.fr}}
\and Evripidis Bampis\thanks{LIP6, Sorbonne Universit\'e, Paris, France. \texttt{evripidis.bampis@lip6.fr}}
\and Vincent Chau\thanks{Universit\'e Paris-Saclay, Univ Evry, IBISC, France. \texttt{vincent.chau@univ-evry.fr}}
\and Johanne Cohen\thanks{Universit\'e Paris-Saclay, CNRS, LISN, 91400 Orsay, France. \texttt{johanne.cohen@lisn.upsaclay.fr}}
\and Alexander Kononov\thanks{Sobolev Institute of Mathematics SB RAS, Russia. \texttt{alvenko@math.nsc.ru}}
\and Yizheng Zhang\thanks{Research Centre for Operations Research and Statistics, KU Leuven, Belgium. \texttt{yizheng.zhang@kuleuven.be}}}
\date{}

\usepackage{algorithmicx}
\usepackage{algpseudocode}

\DeclareMathOperator*{\argmin}{arg\,min}
\DeclareMathOperator*{\argmax}{arg\,max}

\usepackage{framed}
\usepackage{marginnote}

\newcommand{\OPT}{\ensuremath\mathrm{OPT}}

\newcommand{\probname}{\ensuremath\mathsf{Emergency\text{\ }Vertex\text{\ }Cover}}
\newcommand{\probnameshort}{\ensuremath\mathsf{Em\text{-}VC}}
\newcommand{\probnamecont}{\ensuremath\mathsf{Continuous\ Emergency\text{\ }Vertex\text{\ }Cover}}
\newcommand{\probnamecontshort}{\ensuremath\mathsf{CEm\text{-}VC}}
\newcommand{\NP}{\ensuremath\mathsf{NP}}
\newcommand{\SAT}{\ensuremath\mathsf{SAT}}

\newcommand{\broadcast}{\ensuremath\mathsf{Broadcast\text{\ }Domination}}

\newcommand{\msrdc}{\ensuremath\mathsf{MSRDC}}

\newcommand{\mincover}{\ensuremath\mathsf{Minimum\text{\ }Cost\text{\ }Covering}}
\newcommand{\mincovershort}{\ensuremath\mathsf{MCC}}
\newcommand{\diam}{\ensuremath\mathrm{diam}}
\newcommand{\power}{\ensuremath \mathbf{p}}
\newcommand{\edge}[2]{\{#1,#2\}}
\newcommand{\cost}[1]{\ensuremath\mathrm{cost}_{#1}}

\newcommand{\calE}{\ensuremath\mathcal{E}}
\newcommand{\calF}{\ensuremath\mathcal{F}}
\newcommand{\calI}{\ensuremath\mathcal{I}}
\newcommand{\calO}{\ensuremath\mathcal{O}}
\newcommand{\calP}{\ensuremath\mathcal{P}}
\newcommand{\calS}{\ensuremath\mathcal{S}}

\newcommand{\bbN}{\mathbb{N}}
\newcommand{\bbQ}{\mathbb{Q}}
\newcommand{\bbR}{\mathbb{R}}
\renewcommand{\star}{*}

\usepackage{hyperref}

\begin{document}

\maketitle

\begin{abstract}
The ${\ensuremath\mathsf{Minimum\text{\ }Vertex\text{\ }Cover}}$ problem is a fundamental combinatorial optimization problem, aiming to identify a minimum subset of vertices in a graph such that every edge is incident to at least one vertex in this subset. Among its variants, the ${\ensuremath\mathsf{Min\text{-}Power\text{-}Cover}}$ problem stands out due to its practical applications, such as camera placement at intersections: in an edge-weighted graph, an edge is covered if one of its endpoints is assigned a power value at least as large as the edge’s weight.

In this paper, we introduce the ${\ensuremath\mathsf{Emergency\text{\ }Vertex\text{\ }Cover}}$ (${\ensuremath\mathsf{Em\text{-}VC}}$) problem where
an edge may be covered not only by its endpoints, but also by a distant vertex, provided the vertex is given sufficient power to ``cover'' the cumulative weight of the edges along a shortest path to one of the edge’s endpoints plus the weight of the edge. 
${\ensuremath\mathsf{Em\text{-}VC}}$ is motivated by different practical scenarios, e.g. the need for urban disaster response, where 
ensuring accessibility to all road segments (edges of the graph) is crucial for effective aid delivery.

We prove that ${\ensuremath\mathsf{Em\text{-}VC}}$ is NP-hard, derive lower bounds, and design a polynomial-time algorithm for its continuous version. Moreover, we present a $\tfrac{4}{3}$-approximation algorithm for the discrete case and identify several special graph classes for which the problem can be solved in polynomial time.

\medskip
\noindent\textbf{Keywords:} Vertex Cover, Approximation, Min-Power-Cover, NP-hardness
\end{abstract}

\newpage
\section{Introduction}

The ${\ensuremath\mathsf{Minimum\text{\ }Vertex\text{\ }Cover}}$ problem is one of the most fundamental combinatorial optimization problems. Given a graph $G=(V,E)$, 
the goal is to determine a subset of vertices $C \subseteq V$ of minimum cardinality such that for every edge $e \in E$, 
there is at least one endpoint in $C$. Several variants of the ${\ensuremath\mathsf{Minimum\text{\ }Vertex\text{\ }Cover}}$ problem have been studied in the literature. 

One variant of particular interest to our work is the ${\ensuremath\mathsf{Min\text{-}Power\text{-}Vertex\text{-}Cover}}$ problem~\cite{angel2015min}, motivated by applications such as placing cameras at road intersections in a city. In ${\ensuremath\mathsf{Min\text{-}Power\text{-}Vertex\text{-}Cover}}$, we are given an edge-weighted graph, and  an edge $e$ is considered covered  if at least one of its endpoints has a valuation (power) greater than or equal to the weight of $e$. The objective is to determine both a vertex cover and a power assignment to its vertices so as to minimize the total power. More generally, in~\cite{angel2015min}, ${\ensuremath\mathsf{Min\text{-}Power\text{-}Cover}}$ variants of classical graph problems, including Asymmetric Vertex Cover, Minimum Cut, Spanning Tree, and $(s,t)$-Path, have been considered. A natural question is how the complexity of ${\ensuremath\mathsf{Min\text{-}Power\text{-}Cover}}$ variants  compares to that of their corresponding classical graph problems. Interestingly, some problems remain as difficult as their classical versions, while others become much more difficult in the ${\ensuremath\mathsf{Min\text{-}Power\text{-}Cover}}$ setting.

In this paper, we introduce the $\probname$ ($\probnameshort$) problem. In the $\probnameshort$ problem, we are given an edge-weighted graph and an edge $\edge{u}{w}$ may be \emph{covered} by a vertex $v$ (not necessarily an endpoint of the edge) if the power assigned to $v$ is at least the total weight of the edges along a shortest path (measured with respect to edge weights) either from $v$ to $u$ passing through $w$, or from $v$ to $w$ passing through $u$. Unlike ${\ensuremath\mathsf{Min\text{-}Power\text{-}Vertex\text{-}Cover}}$, which is a direct generalization of ${\ensuremath\mathsf{Minimum\text{\ }Vertex\text{\ }Cover}}$ and hence NP-hard, $\probnameshort$ does not obviously generalize a known NP-hard problem, and its complexity is an interesting question. However, the motivation for studying this problem comes not only from this theoretical question but also from applications such as disaster response in urban networks, supply chain management, and utility networks. For instance, after a natural disaster, emergency response teams must ensure that all roads (edges in the graph) remain accessible to deliver aid and evacuate people. 
Therefore, the objective is to strategically assign resources (e.g., fuel or supplies) to emergency stations so that every road (edge) can be accessed by at least one emergency response team.

Another related problem is $\broadcast$ in graphs, first introduced and studied in 2001 by Erwin~\cite{erwin2001cost,erwin2004dominating}.
It captures the problem of positioning radio transmitters with different effective radiated powers. In this model, the vertices represent broadcast sites and each vertex $v$ is assigned a broadcast strength $f(v)\in \{0,1,\ldots,\diam(G)\}$, where $\diam(G)$ is the diameter of~$G$. A vertex $u$ hears the broadcast from $v$ if its shortest-path distance $d(u,v)$ from~$v$ satisfies $d(u,v)\leq f(v)$. A function $f$ is a dominating broadcast if every vertex hears at least one broadcast. The cost of a broadcast is the sum of all strengths, and the broadcast domination number $\gamma_b(G)$ is the minimum cost of a dominating broadcast (see the survey~\cite{henning2021broadcast} for an overview).
The $\probnameshort$ problem can be viewed as a strengthened form of broadcast domination. Indeed, requiring that both endpoints of every edge be covered by the same broadcast vertex forces each edge to be served by a single source, yielding precisely the $\probnameshort$ model. Hence, every feasible solution to $\probnameshort$ is a valid dominating broadcast, whereas the converse is not true, since in standard broadcast domination the two endpoints of an edge may be covered by different vertices. This additional structural constraint has a significant impact on complexity. In general graphs, an optimal dominating broadcast can be computed in $O(n^6)$ time~\cite{heggernes2006optimal}, recently improved to $O(n^5)$~\cite{papadopoulos2026broadcast}, while in the $\probnameshort$ problem that we study, minimizing the total power is NP-hard.

\textbf{Related work.} \quad 
The ${\ensuremath\mathsf{Min\text{-}Power\text{-}Vertex\text{-}Cover}}$ problem was introduced in \cite{angel2015min}, where a $2$-approximation algorithm is presented. Additionally, a $3$-approximation algorithm for the ${\ensuremath\mathsf{Min\text{-}Power\text{-}Cover}}$ variant of the metric Traveling Salesman Problem (TSP) and a $2$-approximation algorithm for the ${\ensuremath\mathsf{Min\text{-}Power\text{-}Cover}}$ variant of the cycle cover problem have been presented in~\cite{angel2015min}. The problem has further been studied from the perspective of parameterized complexity in~\cite{angel2018parameterized}.
The authors proposed fixed-parameter tractable (FPT) algorithms when the problem is parameterized by the value of the {\em optimal total power} $P$, or by the {\em number of vertices} $k$ {\em receiving positive power}. They also showed that the problem becomes significantly more difficult than the classical ${\ensuremath\mathsf{Minimum\text{\ }Vertex\text{\ }Cover}}$ problem when parameterized by the {\em graph's treewidth} $t$. Specifically, they proved that, unless the Exponential Time Hypothesis (ETH) fails, there is no algorithm running in time $n^{o(t)}$. To overcome this hardness, they developed an FPT approximation scheme that computes a $(1 + \epsilon)$-approximation of the optimal solution in time that is FPT in parameters $t$ and $1/\epsilon$.

Note that the ${\ensuremath\mathsf{Min\text{-}Power\text{-}Cover}}$ model differs from the ${\ensuremath\mathsf{Min\text{-}Power}}$ model discussed in various works \cite{DBLP:journals/winet/AlthausCMPTZ06,DBLP:conf/esa/Grandoni12,DBLP:journals/tcs/KirousisKKP00}. In the ${\ensuremath\mathsf{Min\text{-}Power}}$ model, a specific requirement is that to cover a given edge, the power allocated to each of its endpoints must be at least the cost of that edge. This means that both endpoints must reach the power threshold of the edge to support its connection, which is not required in our model. \textcolor{black}{The ${\ensuremath\mathsf{Min\text{-}Power\text{-}Cover}}$ model may allow for more flexibility in power allocation while still ensuring connectivity.}
Additional related results can be found in~\cite{improved,cohen,fujito}. Prize-collecting variants have been studied in a geometric setting~\cite{dai2022approximation,liu2022primal}: sensors in the plane are assigned radii, a radius $r$ costs $r^\alpha$, and uncovered points incur a penalty. There the covered objects are points rather than edges, and the metric is Euclidean, so these models only loosely relate to ours.

In~\cite{drexler2023} the authors study the $\msrdc$ (Minimum Sum of Radius-Dependent Costs) problem: given $k$ and some metric space $(V,d)$ where $V=F\cup C$ for facilities $F$ and clients $C$, the goal is to find a clustering given by $k$ facility-radius pairs $(f_1,r_1),\ldots ,(f_k,r_k)$ such that all clients are covered, i.e., $C\subseteq B(f_1,r_1)\cup \cdots \cup B(f_k,r_k)$ where $B(x,r)$ is the radius-$r$ ball centered at $x$, and such that $\sum_{i=1}^k g(r_i)$ is minimized for some increasing function $g$. Their main contribution is that the $\msrdc$ problem in shortest-path metrics can be solved in polynomial time for bounded-treewidth graphs (see also~\cite{swat2026} for related results on this class of problems). If we want to solve the 
${\ensuremath\mathsf{Em\text{-}VC}}$ problem on a graph $G=(V,E)$ we can use $G'$, the expanded version of $G$ (see Definition~\ref{def-expanded-graph}) and then solve the $\msrdc$ on $G'$ with the metric induced by the shortest path distances in $G'$, with $F$ (resp. $C$) the vertices of $G$ (resp. $G'$), and with $k$ ranging from $1$ to $|V|$. Notice, however, that this reduction is valid only if the weights in graph $G$ satisfy the triangle inequality. This is of course the case if $G$ is a tree.  This shows that the problem is solvable in polynomial time on trees. However, we prove this result with a different approach in Theorem~\ref{theo-tree}, and the algorithm we obtain has a much lower time complexity.


\textbf{Our contribution.} \quad In Section~\ref{sec:def} we introduce the basic definitions, we show as a warm-up that the problem is solvable in polynomial time if we assume unit weights, and we develop the main lower bound used in the analysis of our approximation algorithm. We show that the problem is $\NP$-hard in Section~\ref{sec:hardness}. Section~\ref{sec:continuous} presents a polynomial-time algorithm for the continuous variant. Building on this formulation, Section~\ref{sec:approx} gives a $\tfrac{4}{3}$-approximation algorithm for the discrete problem. Finally, Section~\ref{sec:polynomial} identifies polynomial-time solvable cases, including complete graphs and trees (via a reduction to the $\mincover$ problem~\cite{Kolen}).

\section{Definitions and preliminary results}\label{sec:def}

We consider a simple, undirected, connected graph $G=(V, E)$ with edge weights $w:E\to\bbN$, where $\bbN=\{0,1,2,\ldots\}$. Throughout the paper, we use both the \emph{weighted length} and the \emph{number of edges} of a path. For a path $q=(e_1,\dots,e_{\ell})$, we write $|q|={\ell}$ for the number of edges and $w(q)=\sum_{i=1}^{\ell} w(e_i)$ for its weighted length. The distance $d(u,v)$ always denotes the weighted length of a shortest path between $u$ and $v$. \textcolor{black}{Moreover, when considering two graphs simultaneously, we write $d_G(u,v)$ to avoid ambiguity.} We use $\diam(G)$ for the diameter of~$G$.

A \emph{power assignment} on~$V$ is a function $\power:V\to\bbN$. A node~$v$ \emph{covers} an edge~$e$ if an agent that starts 
from~$v$ with power~$\power(v)$ has enough power to traverse~$e$, assuming that the agent moves only along edges of the graph, and that for each edge traversal, it spends power equal to the 
weight of the traversed edge. We denote by $\cost{v}(e)$ the power required at a node~$v$ to cover an edge $e=\edge{x}{y}$, i.e., \(\cost{v}(e)=\min\bigl(d(v,x),d(v,y)\bigr)+w(e)\).
With this definition, node~$v$ \emph{covers} edge~$e$ if and only if $\power(v)\geq \cost{v}(e)$. 

\begin{observation} \label{obs:reachendpoints}
    If a node~$v$ covers edge~$\{x,y\}$, then it has sufficient power to reach both~$x$ and~$y$. We have, then, $\power(v)\geq d(v,x)$ and $\power(v)\geq d(v,y)$. The converse is not necessarily true.
\end{observation}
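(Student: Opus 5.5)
The first claim follows from the triangle inequality for shortest-path distances together with nonnegativity of edge weights. By definition, $v$ covers $e=\edge{x}{y}$ exactly when $\power(v)\geq \cost{v}(e)=\min\bigl(d(v,x),d(v,y)\bigr)+w(e)$. By symmetry between $x$ and $y$, I may assume the minimum is attained at $x$, so $\power(v)\geq d(v,x)+w(e)$.

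Since $w(e)\geq 0$, this gives $\power(v)\geq d(v,x)$ immediately. For the other endpoint, the path from $v$ to $x$ extended by the edge $e$ is a walk from $v$ to $y$ of weight $d(v,x)+w(e)$. Hence $d(v,y)\leq d(v,x)+w(e)\leq \power(v)$. This proves both inequalities.

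The statement that the converse can fail needs a small counterexample. Take the triangle on vertices $v,x,y$ with $w(\edge{v}{x})=w(\edge{v}{y})=1$ and $w(\edge{x}{y})=2$, and set $\power(v)=1$. Then $d(v,x)=d(v,y)=1\leq \power(v)$, so $v$ reaches both endpoints. However, $\cost{v}(\edge{x}{y})=1+2=3>1$, so $v$ does not cover $\edge{x}{y}$.

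The argument has no real obstacle. The only care needed is to apply the triangle inequality to weighted distances, which is valid because the weights are nonnegative and $d$ is a shortest-path metric.
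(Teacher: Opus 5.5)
Your proof is correct and takes essentially the paper's approach. The paper states this observation without a separate proof, treating it as immediate from $\cost{v}(e)=\min\bigl(d(v,x),d(v,y)\bigr)+w(e)$ (an agent that traverses $e$ passes through both endpoints); your triangle-inequality argument makes this explicit, and your triangle with weights $1,1,2$ and $\power(v)=1$ is a valid witness that the converse fails, which the paper does not provide.
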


In the \textsf{Emergency Vertex Cover} problem, the goal is to compute a power assignment~$\power$ so that every edge is covered by some node, and the total power on all nodes is minimized. The 
problem is defined formally as follows:

\begin{problem}[$\probname, \probnameshort$] $ $ \\
  \emph{Instance}: $\langle G, w\rangle$, where $G=(V, E)$ is a simple, undirected, connected, weighted graph with edge weights $w: E\to\bbN$. \\
  \emph{Feasible solution}: a power assignment $\power:V\to\bbN$ such that for every~$e\in E$, there exists~$v\in V$ with $\power(v)\geq \cost{v}(e)$. \\
  \emph{Goal}: minimize $\sum_{v\in V} \power(v)$.
\end{problem}

We assume without loss of generality that $w$ is everywhere strictly positive. Indeed, a zero-weight edge can be handled simply by contracting the edge and, if parallel edges are created, keeping only the heaviest one between any pair of vertices, since if the heaviest edge is covered, then all the others are also covered.

The \emph{support} of a solution is the set of nodes that receive nonzero power. An \emph{optimal solution with minimum support} is an optimal 
solution with support~$S$, such that the support $S'$ of every other optimal solution satisfies $|S'| \geq |S|$. 
A path~$q=(e_1,\dots,e_{\ell})$ is a \emph{covering path from~$v\in V$ to~$e_{\ell}\in E$} if $q$ starts from~$v$, its last edge is~$e_{\ell}$, and $w(q)\leq \power(v)$.



\subsection{Unit weights}


We show that, for the special case of graphs with unit weights, there always exists an optimal solution with support of size~$1$, which can be computed in polynomial time. We state Lemma~\ref{prop:mergesources} below for weighted graphs, and then we apply it to unit-weight graphs.

\begin{lemmarep}  \label{prop:mergesources}
Let $\power: V\to\mathbb{N}$ be a solution with support $S$, where $|S|\geq 2$, and let $u,v$ be two distinct vertices in $S$. Let $q=(e_1,\dots,e_\ell)$ be a path from $u$ to $v$ consisting of $\ell\ge 1$ edges, with no edge repeated.
 Let $A = \max \{ a\in [0,\ell] : \power(u) \geq \sum_{i=1}^a w(e_i)\}$, and $B =  \min \{ 
b\in[1,\ell+1] : \power(v) \geq \sum_{i=b}^\ell w(e_i)\}$.  %
If $B\leq A+1$ and there exist indices $\sigma\in[B,A+1]$ and $\lambda\in[1,\ell+1]$ such that $\sum_{i=1}^{\sigma-1} w(e_i) = \sum_{i=\lambda}^\ell w(e_i)$, then there 
exists another feasible solution with the same cost and strictly smaller support.
\end{lemmarep} 

\begin{proof}
 If $\lambda = \ell+1$, then $\sum_{i=1}^{\sigma-1} w(e_i) = \sum_{i=\ell+1}^\ell w(e_i)=0$, therefore $\sigma\leq 1$. This implies $B=1$ because 
$B\leq \sigma\leq 1$ and $B\geq 1$ by definition of~$B$. We construct a solution $p'$ with $\power'(v)=\power(u)+\power(v)$, $\power'(u)=0$, and $\power'(z)=\power(z)$ for 
$z\notin\{u,v\}$.
This solution has strictly smaller support and it is feasible because, starting from~$v$, one can follow $q$ in reverse 
and arrive at $u$ with remaining power at least $\power(u)+\power(v)-\sum_{i=1}^\ell w(e_i)=\power(u)+\power(v)-\sum_{i=B}^\ell w(e_i) \geq \power(u)$, where we used the fact that 
$\sum_{i=B}^\ell w(e_i) \leq \power(v)$ by definition of~$B$.

If $\lambda\in[1,\ell]$, note that $\sum_{i=1}^{\sigma-1} w(e_i) = \sum_{i=\lambda}^\ell w(e_i)$ implies $\sum_{i=\sigma}^{\ell} w(e_i) = 
\sum_{i=1}^{\lambda-1} w(e_i)$. Let $x$ be the node from which one starts the traversal of $e_\lambda$ while walking on the path~$q$. We construct a 
solution~$\power'$ with $\power'(x) = \power(x)+\power(u)+\power(v)$, $\power'(u)=\power'(v)=0$, and $\power'(z)=\power(z)$ for $z\notin\{u,x,v\}$. This solution has strictly smaller support and 
it is feasible because:
\begin{itemize}
 \item Starting from~$x$, one can follow the edges $(e_{\lambda-1},\dots,e_1)$
 and arrive at~$u$ with remaining power at least
\begin{align*}
\power(x)+\power(u)+\power(v)-\sum_{i=1}^{\lambda-1} w(e_i) 
&= \power(x)+\power(u)+\power(v)-\sum_{i=\sigma}^\ell w(e_i) \\
&\geq \power(x)+\power(u)+\power(v)-\sum_{i=B}^\ell w(e_i) \geq \power(u)
\end{align*}
 where we 
used the facts 
that $\sigma\geq B$ and   $\sum_{i=B}^\ell w(e_i) \leq \power(v)$ by definition of~$B$.
 \item Starting from~$x$, one can follow the edges $(e_{\lambda},\dots,e_\ell)$ and arrive at~$v$ with remaining power at least
\begin{align*}
\power(x)+\power(u)+\power(v)-\sum_{i=\lambda}^\ell w(e_i) &= \power(x)+\power(u)+\power(v)-\sum_{i=1}^{\sigma-1} w(e_i) \\
&\geq \power(x)+\power(u)+\power(v)-\sum_{i=1}^{A} w(e_i) \geq \power(v)
\end{align*} where 
we used the 
facts that $\sigma\leq A+1$ and $\sum_{i=1}^{A} w(e_i) \leq \power(u)$ by definition of~$A$.\qedhere
\end{itemize}
\end{proof}

Lemma~\ref{prop:mergesources} formalizes a merging principle: when two powered vertices $u$ and $v$ are sufficiently close along a simple path $q$, their power can be reallocated and combined without increasing the total cost. The proof (given in the appendix) proceeds by shifting the power of one vertex toward an appropriate point on $q$, thereby merging the two supports into a single one.

\begin{corollary} \label{cor:mergesources}
Let $\power: V\to\bbN$ be a solution with support~$S$, $|S|\geq 2$, and $u$, $v$ two distinct nodes in~$S$. If there exists a path~$q$ from~$u$ to~$v$ such 
that $\sum_{e\in q} w(e) \leq \max\bigl(\power(u),\power(v)\bigr)$, 
then there exists another solution with the same cost but with strictly smaller support.
\end{corollary}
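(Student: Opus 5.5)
The plan is to reduce the statement directly to Lemma~\ref{prop:mergesources} by choosing the indices $\sigma$ and $\lambda$ so that the equality required by the lemma holds trivially.

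\textbf{Reductions.} By symmetry (reversing the path and exchanging the roles of $u$ and $v$), I will assume $\power(u)\ge\power(v)$, so that $w(q)\le\power(u)$. The lemma requires a path with no repeated edge. If $q$ repeats a vertex, I will shortcut it to a simple path $q'$ from $u$ to $v$ by deleting closed subwalks. Since all weights are nonnegative, $w(q')\le w(q)\le\power(u)$. So I may assume $q=(e_1,\dots,e_\ell)$ is simple. Since $u\neq v$, we have $\ell\ge1$.

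\textbf{Checking the lemma's hypotheses.} Because $\sum_{i=1}^{\ell}w(e_i)=w(q)\le\power(u)$, the maximum in the definition of $A$ is attained at $a=\ell$, so $A=\ell$. By definition $B\le\ell+1=A+1$, so $B\le A+1$ holds automatically. Now choose $\sigma=\ell+1\in[B,A+1]$ and $\lambda=1$. The required equality becomes
\[
\sum_{i=1}^{\ell}w(e_i)=\sum_{i=1}^{\ell}w(e_i),
\]
which is trivially true. Lemma~\ref{prop:mergesources} then yields a feasible solution with the same cost and strictly smaller support.

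\textbf{Expected obstacle.} The only delicate point is bookkeeping. One must confirm that the shortcut path still has weight at most $\power(u)$, and that the symmetric case $\power(v)>\power(u)$ is genuinely covered by reversing $q$. Both follow because the hypothesis is symmetric in $u$ and $v$ and path weights are nonnegative.

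Concretely, with $\lambda=1$ the lemma's construction moves all the power onto $x=u$ (the tail of $e_1$). This matches the intuition: $u$ can reach $v$ along $q$ and still has at least $\power(v)$ left, so it can simulate $v$.
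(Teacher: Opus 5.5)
Your proposal is correct and matches the paper's proof: assume without loss of generality that $w(q)\le\power(u)$, shortcut $q$ to a path without repeated edges, observe $A=\ell$ so $B\le A+1$, and apply Lemma~\ref{prop:mergesources} with $\sigma=A+1$ and $\lambda=1$. Your added checks (that $B\le\ell+1$ always holds and that shortcutting cannot increase the weight) are exactly the bookkeeping the paper leaves implicit.
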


\begin{proof}
Assume without loss of generality that $\power(u)\geq \sum_{e\in q} w(e)$. There exists a path~$\tilde{q}$ from~$u$ to~$v$ without repeated edges such that $\sum_{e\in\tilde{q}} w(e)\leq \sum_{e\in q} w(e)$. We apply Lemma~\ref{prop:mergesources} for $u$, $v$, and $\tilde{q}$. We have 
$A=|\tilde{q}|$ and therefore $B\leq A+1$ is satisfied. The remaining hypotheses of Lemma~\ref{prop:mergesources} hold by taking 
$\sigma=A+1$ and $\lambda=1$. Thus, the corollary follows.
\end{proof}


\begin{observation} \label{obs:unit:mergesources}
In a graph with unit weights, let $\power: V\to\bbN$ be a solution with support $S$, $|S|\geq 2$, $u,v$ be two distinct nodes in~$S$, and $q$ be a path 
with no repeated edges from~$u$ to~$v$. The hypotheses of Lemma~\ref{prop:mergesources} are satisfied by $u$,$v$, and $q$ if and only if 
$\power(u)+\power(v)\geq |q|$. 
\end{observation}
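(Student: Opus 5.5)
With unit weights every $w(e_i)=1$, so the quantities in Lemma~\ref{prop:mergesources} become explicit. The plan is to compute $A$ and $B$, rewrite the lemma's hypotheses in terms of integers, and check each direction of the equivalence.

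First, $\sum_{i=1}^a w(e_i)=a$ and $\sum_{i=b}^{\ell} w(e_i)=\ell-b+1$. Hence
\[
A=\min\bigl(\power(u),\ell\bigr), \qquad B=\max\bigl(1,\ \ell+1-\power(v)\bigr).
\]
The index condition $\sum_{i=1}^{\sigma-1}w(e_i)=\sum_{i=\lambda}^{\ell}w(e_i)$ becomes $\sigma-1=\ell-\lambda+1$, that is, $\lambda=\ell+2-\sigma$. The lemma requires $\lambda\in[1,\ell+1]$, which is equivalent to $\sigma\in[1,\ell+1]$. Since $1\le B$ and $A+1\le \ell+1$, every $\sigma\in[B,A+1]$ already lies in $[1,\ell+1]$. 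So the existential condition on $\sigma,\lambda$ holds automatically whenever $[B,A+1]$ is nonempty, and the hypotheses of Lemma~\ref{prop:mergesources} reduce to the single inequality $B\le A+1$.

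It then remains to show that $B\le A+1$ holds if and only if $\power(u)+\power(v)\ge \ell=|q|$. I would split into the four cases given by which term attains the min in $A$ and the max in $B$.

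\begin{itemize}
\item If $A=\ell$, then $B\le \ell+1=A+1$ holds, and $\power(u)\ge \ell$ gives the sum inequality. So both sides are true.
\item If $B=1$, then $B\le A+1$ holds since $A\ge 0$, and $\power(v)\ge \ell$ gives the sum inequality. Again both sides are true.
\item Otherwise $A=\power(u)<\ell$ and $B=\ell+1-\power(v)>1$. Then $B\le A+1$ reads $\ell+1-\power(v)\le \power(u)+1$, which is exactly $\power(u)+\power(v)\ge \ell$.
\end{itemize}

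This gives the equivalence in every case. The main care needed, rather than a real obstacle, is checking the boundary conventions: that the ranges $[0,\ell]$ and $[1,\ell+1]$ are integer ranges and that the empty sums equal $0$. With those conventions, the choice $\lambda=\ell+2-\sigma$ always falls in the allowed range.
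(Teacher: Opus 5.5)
Your argument is correct. With unit weights you get $A=\min(\power(u),\ell)$ and $B=\max(1,\ell+1-\power(v))$. The choice $\lambda=\ell+2-\sigma$ makes the index condition automatic for every $\sigma\in[B,A+1]$, and $B\le A+1$ then reduces to $\power(u)+\power(v)\ge |q|$. The paper states this as an observation without a written proof, and your computation is the direct verification it leaves implicit.
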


Lemma~\ref{ref:unit} follows by repeatedly applying Lemma~\ref{prop:mergesources}, using Observation~\ref{obs:unit:mergesources} to express the merging condition in the unit-weight setting.

\begin{lemma}\label{ref:unit}
In a graph with $n\geq 2$ nodes and unit weights, an optimal solution with support of size~$1$ exists.
\end{lemma}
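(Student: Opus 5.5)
We take an optimal solution $\power$ whose support is as small as possible and show that it has size~$1$. The support cannot be empty: with $n\geq 2$ and $G$ connected there is at least one edge, and covering it needs positive power. So suppose, for contradiction, that $|S|\geq 2$. We will find $u,v\in S$ and a path $q$ without repeated edges from $u$ to $v$ with $\power(u)+\power(v)\geq |q|$. Observation~\ref{obs:unit:mergesources} then says the hypotheses of Lemma~\ref{prop:mergesources} hold, so there is a feasible solution of the same cost with strictly smaller support. This contradicts minimality.

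The key step is producing such a pair, and we use connectivity and feasibility for it. For each $s\in S$ let $R_s=\{z\in V : d(s,z)\leq \power(s)\}$, the ball reachable from $s$. Every vertex $z$ has an incident edge, which is covered by some $s\in S$. By Observation~\ref{obs:reachendpoints}, $z\in R_s$, so the balls $R_s$ cover $V$. Since $G$ is connected and $|S|\geq 2$, we can find two distinct $u,v\in S$ such that either $R_u\cap R_v\neq\emptyset$, or some edge $\edge{x}{y}$ has $x\in R_u$ and $y\in R_v$. To see this, suppose not; then the union of balls containing $R_u$ for a fixed $u$ would be closed under adjacency and hence all of $V$, which forces a second ball to meet it.

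\begin{itemize}
\item If $z\in R_u\cap R_v$, concatenate shortest paths $u\to z\to v$, of length at most $\power(u)+\power(v)$. Remove cycles to get a path without repeated edges of no greater length; it is nonempty since $u\neq v$.
\item In the edge case, the edge $\edge{x}{y}$ is covered by some $s\in S$. The $s$-to-$x$ and $s$-to-$y$ distances are both at most $\power(s)-1$, so $x$ and $y$ lie strictly inside $R_s$. If $s\notin\{u,v\}$, then $R_s$ meets $R_u$ and we are in the first case with the pair $(u,s)$. If $s=u$, then $d(u,y)\leq \power(u)$, so $y\in R_u\cap R_v$, again the first case.
\end{itemize}

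So in all cases a suitable path exists, with $|q|\leq \power(u)+\power(v)$.

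The main obstacle is checking that the path handed to Lemma~\ref{prop:mergesources} meets its formal requirements. It must be simple in edges, which we ensure by shortcutting. It must have $\ell\geq1$, which holds because $u\neq v$. Finally, the equivalence in Observation~\ref{obs:unit:mergesources} must apply exactly, which it does since we only need the ``if'' direction. Once these are in place, the contradiction finishes the proof.
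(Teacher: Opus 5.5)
Your proof is correct and is essentially the paper's argument. Both reduce to exhibiting two distinct support vertices whose reach-balls meet, and then invoke Observation~\ref{obs:unit:mergesources} and Lemma~\ref{prop:mergesources}. The paper finds them by taking the $(\power(u)+1)$-st edge of a shortest $u$--$v$ path, which neither $u$ nor $v$ covers, so its covering vertex $z$ satisfies $d(u,z)\le\power(u)+\power(z)$; you find them by a global connectivity argument, which is closer to the paper's proof of Lemma~\ref{prop:cont:onesource}.

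One small slip: covering $\edge{x}{y}$ only gives $\min\bigl(d(s,x),d(s,y)\bigr)\le\power(s)-1$, not both distances at most $\power(s)-1$. This is harmless, since you only use $x,y\in R_s$, which Observation~\ref{obs:reachendpoints} already provides. The omitted case $s=v$ is symmetric to $s=u$.
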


\begin{proof}
  Let $\power: V\to\bbN$ be an optimal solution with minimum support, and let $S$ be the support of~$\power$. For a contradiction, suppose that $|S|\geq 2$.
Let~$u,v$ be two distinct nodes in~$S$ and let~$q$ be a shortest path from~$u$ to~$v$. If $\power(u)+\power(v)\geq |q|$, then by 
Observation~\ref{obs:unit:mergesources}, Lemma~\ref{prop:mergesources} applies, and we obtain an optimal solution with support size smaller 
than~$|S|$, a contradiction.

On the other hand, if $\power(u)+\power(v)<|q|$, then let $e=\edge{x}{y}$ be the $\bigl(\power(u)+1\bigr)$-st edge of~$q$ that one traverses while following~$q$ 
from~$u$ to~$v$. This edge must be covered by some node $z\notin\{u,v\}$ with $\power(z)\geq \min\bigl(d(z,x),d(z,y)\bigr)+1$. Note that, because of the 
unit weight assumption, $\min\bigl(d(z,x),d(z,y)\bigr) \geq d(z,x)-1$. Thus, $\power(z)\geq d(z,x)$.
Let $r$ be a shortest path from~$z$ to~$x$, let~$x'$ 
be the first node of~$r$ that is also in the sub-path of~$q$ from~$u$ to~$x$, and let~$s$ be the concatenation of the sub-path of~$q$ from~$u$ to~$x'$ 
and the reverse of the sub-path of~$r$ from~$z$ to~$x'$. Now, $s$ is a path from~$u$ to~$z$ and $\power(u)+\power(z)\geq |s|$, therefore 
Lemma~\ref{prop:mergesources} applies, and we obtain an optimal solution with support size smaller than~$|S|$, a contradiction.
\end{proof}

By Lemma~\ref{ref:unit}, it suffices to try every vertex as the unique powered vertex. A breadth-first search from each vertex gives all distances in $\calO(n(n+m))=\calO(nm)$ time, since the graph is connected, and the power needed at a vertex to cover all edges is then computed in $\calO(m)$ time. This yields the following corollary.

\begin{corollary}
In a graph with $n\geq 2$ nodes and unit weights, an optimal solution is to choose $v^\star \in \argmin_{v\in V}\ \max_{e\in E} \cost{v}(e)$ and set
$\power(v^\star)= \max_{e\in E} \cost{v^\star}(e)$ and $\power(v)=0$ for $v\neq v^\star$. This solution can be computed in $\calO(nm)$ time, where $m=|E|$.
\end{corollary}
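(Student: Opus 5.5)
The plan is to deduce the corollary directly from Lemma~\ref{ref:unit}, and then count the running time.

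\emph{Optimality.} By Lemma~\ref{ref:unit} there is an optimal solution $\power^\star$ whose support is a single vertex $z$. Since $\power^\star$ is feasible and only $z$ has positive power, every edge $e$ must be covered by $z$. Hence $\power^\star(z)\geq \max_{e\in E}\cost{z}(e)$, and the cost of $\power^\star$ is at least $\max_{e\in E}\cost{z}(e)\geq \min_{v\in V}\max_{e\in E}\cost{v}(e)$.

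\emph{Feasibility and cost of the proposed solution.} The solution that gives power $\max_{e\in E}\cost{v^\star}(e)$ to $v^\star$ and zero elsewhere is feasible, because $v^\star$ then covers every edge by definition of $\cost{v^\star}$. Its cost is exactly $\min_{v}\max_{e}\cost{v}(e)$, which matches the lower bound, so it is optimal. If $G$ has no edges the claim is trivial; since $G$ is connected with $n\ge2$, this case does not arise. The only subtle point is that the lower bound uses the single-support structure given by Lemma~\ref{ref:unit}. This is where all the difficulty of the corollary lies, and it has already been handled.

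\emph{Running time.} For each $v\in V$, run a BFS from $v$ in $\calO(n+m)$ time; with unit weights this gives all distances $d(v,\cdot)$. Then compute $\cost{v}(\edge{x}{y})=\min(d(v,x),d(v,y))+1$ for each edge and take the maximum over edges in $\calO(m)$ time. Over all $n$ vertices this totals $\calO(n(n+m))$. Because $G$ is connected, $m\ge n-1$, so this is $\calO(nm)$. Selecting the minimizer $v^\star$ adds only $\calO(n)$ time.
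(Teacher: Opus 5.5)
Your proposal is correct and follows the paper's route. You invoke Lemma~\ref{ref:unit} to restrict attention to single-vertex supports, note that the best such solution is exactly the $\argmin$--$\max$ choice, and bound the running time by one BFS per vertex plus an $\calO(m)$ scan, which gives $\calO(n(n+m))=\calO(nm)$ by connectivity; you only spell out the matching lower bound more explicitly than the paper does.
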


\subsection{A lower bound on the cost of feasible solutions}\label{sec:lower_bound}

We derive Lemma~\ref{lem:diamlb}, which is the main structural lemma that allows us to obtain a lower bound on the cost of an optimal solution. We use this in our analysis of the approximation ratio of the algorithm in Section~\ref{sec:approx}.

\begin{definition}[see Figure~\ref{fig:expanded}] \label{def-expanded-graph}
Let $G=(V, E)$ be a weighted graph with weights $w: E\to\bbN$. We define the corresponding \emph{expanded graph} $G'=(V',E')$ with weights 
$w':E'\to\bbR^+$ as follows: $V'=V \cup \bigcup_{e\in E} \left\{h_e,h'_e\right\}$, $E'=\bigcup_{\edge{u}{v}\in E} \bigl\{\edge{u}{h_{\edge{u}{v}}}, \edge{v}{h_{\edge{u}{v}}}, 
\edge{h_{\edge{u}{v}}}{h'_{\edge{u}{v}}}\bigr\}$, and, for each $\edge{u}{v}\in E$, $w'\bigl(\edge{u}{h_{\edge{u}{v}}}\bigr) = w'\bigl(\edge{v}{h_{\edge{u}{v}}}\bigr) = 
w'\bigl(\edge{h_{\edge{u}{v}}}{h'_{\edge{u}{v}}}\bigr) = \frac{1}{2}w\bigl(\edge{u}{v}\bigr)$.
\end{definition}

\begin{figure}[ht]
\centering
\begin{tikzpicture}[scale=1]
  \node[draw,circle] (u) at (0,0) {$u$};
  \node[draw,circle] (v) at (2,0) {$v$};
  \draw[draw,circle] (u) -- (v) node[midway, above] {$w(e)$};

  \node[draw,circle] (u2) at (5,0) {$u$};
  \node[draw,circle] (v2) at (7,2) {$h_{e}$};
  \node[draw,circle] (c) at (7,0) {$h'_{e}$};
  \node[draw,circle] (x) at (9,0) {$v$};
  
  \draw (u2) -- (c) node[midway, above] {$\tfrac{1}{2}w(e)$};
  \draw (v2) -- (c) node[midway, right] {$\tfrac{1}{2}w(e)$};
  \draw (c) -- (x) node[midway, above] {$\tfrac{1}{2}w(e)$};
  
\end{tikzpicture}
\caption{Transformation of an edge $e=\edge{u}{v}$ in $G$ into its expanded version in $G'$.}
\label{fig:expanded}
\end{figure}

\begin{observation} \label{obs:expsol}
 Every feasible solution for~$\probnameshort$ on~$\langle G,w\rangle$ is also a feasible solution for~$\probnameshort$ on~$\langle G',w'\rangle$, where $G'$ is the corresponding expanded graph with weights~$w'$.
\end{observation}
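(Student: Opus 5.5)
The plan is to take a feasible solution $\power$ for $\langle G,w\rangle$ and extend it to $V'$ by setting $\power(h_e)=\power(h'_e)=0$ for every $e\in E$. This does not change the total cost. It then suffices to check that every edge of $G'$ is covered in $G'$ by the same vertex that covers the corresponding edge of $G$. No earlier lemma is needed, only the definitions of $\cost{v}$ and of the expanded graph.

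\textbf{Step 1: distances between original vertices do not increase.} I will show $d_{G'}(a,b)\le d_G(a,b)$ for all $a,b\in V$. Equality actually holds, but only this inequality is needed. Given a path in $G$ from $a$ to $b$, replace each edge $\edge{x}{y}$ by the two edges $\edge{x}{h_{\edge{x}{y}}},\edge{h_{\edge{x}{y}}}{y}$ of $G'$. Their total weight is $\tfrac12 w(\edge{x}{y})+\tfrac12 w(\edge{x}{y})=w(\edge{x}{y})$, so the resulting path in $G'$ has the same weighted length as the original one.

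\textbf{Step 2: coverage of the three edges of each gadget.} Fix $e=\edge{x}{y}\in E$ and a vertex $v\in V$ covering $e$ in $G$. Without loss of generality $\power(v)\ge d_G(v,x)+w(e)$. Using Step 1 together with $d_{G'}(v,h_e)\le d_{G'}(v,x)+\tfrac12 w(e)$, the costs in $G'$ are bounded as follows:
\begin{align*}
\cost{v}\bigl(\edge{x}{h_e}\bigr) &\le d_{G'}(v,x)+\tfrac12 w(e) \le d_G(v,x)+\tfrac12 w(e),\\
\cost{v}\bigl(\edge{h_e}{y}\bigr) &\le d_{G'}(v,h_e)+\tfrac12 w(e) \le d_G(v,x)+w(e),\\
\cost{v}\bigl(\edge{h_e}{h'_e}\bigr) &\le d_{G'}(v,h_e)+\tfrac12 w(e) \le d_G(v,x)+w(e).
\end{align*}
Each of these bounds is at most $\power(v)$, so $v$ covers all three edges of the gadget of $e$.

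Every edge of $E'$ belongs to the gadget of exactly one $e\in E$, so the extended assignment is feasible on $\langle G',w'\rangle$, with the same cost. I do not expect a real obstacle. The only points needing care are to read the definition correctly ($h_e$ is the midpoint of $e$, and $h'_e$ is a pendant vertex attached to $h_e$), and to note that a bound $d_{G'}\le d_G$ suffices, so the easy direction of Step 1 is enough.
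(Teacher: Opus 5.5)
Your proof is correct. It is the argument the paper leaves implicit for this observation, which is stated without proof: extend $\power$ by zero on the new vertices, note that $G'$ does not increase distances between original vertices (the paper remarks right afterwards that they are in fact preserved), and check that the vertex $v$ covering $e=\edge{x}{y}$ reaches all three gadget edges within budget $d_G(v,x)+w(e)\le\power(v)$. You follow the textual definition ($h_e$ as the centre, $h'_e$ as the pendant), whereas Figure~\ref{fig:expanded} swaps these labels, but your argument is unaffected by that relabelling.
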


By construction of $G'$ (each edge $e=\edge{u}{v}$ replaced by a  three-edge star with
weights $w(e)/2$), the shortest-path distances between the original vertices are preserved:
for all $a$, $b\in V$, $d_{G}(a,b)=d_{G'}(a,b)$. 
Hence, we can deduce the following observation (in some graphs this bound is tight, for instance when $G$ is a single edge).

\begin{observation} \label{obs:expdiam}
 If $\langle G',w'\rangle$ is the corresponding expanded graph of~$\langle G,w\rangle$, then $\diam(G') \geq \diam(G)$.
\end{observation}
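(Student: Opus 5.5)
The approach is to embed $G$ isometrically in $G'$ and then exhibit, inside $G'$, a vertex pair that is at least as far apart as any pair in $G$.

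First I would check the distance-preservation claim made just before the statement: $d_{G'}(a,b)=d_G(a,b)$ for all $a,b\in V$. For the direction $d_{G'}\le d_G$, take any path in $G$ and replace each edge $\edge{u}{v}$ by the two-edge detour $u$--$h'_e$--$v$ in $G'$, whose length is $\tfrac12 w(e)+\tfrac12 w(e)=w(e)$. For the reverse direction, consider a path in $G'$ between original vertices. The vertices $h_e$ and $h'_e$ other than the central one of each star have degree one, so a shortest path never enters them. The central vertex has exactly two neighbours in $V$, namely $u$ and $v$. Hence every simple $G'$-path between original vertices passes through the star gadgets as $u\to v$ traversals, each costing exactly $w(e)$, and projects onto a walk in $G$ of the same length.

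Next I pick $a,b\in V$ with $d_G(a,b)=\diam(G)$, which exist because $G$ is connected and finite. Then
\[
\diam(G')=\max_{x,y\in V'}d_{G'}(x,y)\ \geq\ d_{G'}(a,b)=d_G(a,b)=\diam(G),
\]
and this is the claim. One could strengthen the bound by using the pendant vertices, which lie at positive distance beyond the centre of each star. That refinement is not needed here, and it would only matter for the tightness remark.

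The only point that needs care is the labelling inconsistency between Definition~\ref{def-expanded-graph} and Figure~\ref{fig:expanded}. In the definition, $h_e$ is adjacent to $u$ and $v$. In the figure, $h'_e$ plays that role. Either reading works, since the argument only uses the fact that one gadget vertex is joined to both endpoints with weight $w(e)/2$ and the other is a pendant. The remaining steps are routine.
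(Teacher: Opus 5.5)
Your proof is correct and follows the paper's own route. The paper also derives the observation directly from $d_{G'}(a,b)=d_G(a,b)$ for all original vertices $a,b\in V$, which it asserts as evident from the construction, so your explicit check of that identity fills in the one step the paper leaves implicit (only the direction $d_{G'}(a,b)\geq d_G(a,b)$ is actually needed), and your note on the labelling mismatch between Definition~\ref{def-expanded-graph} and Figure~\ref{fig:expanded} is accurate and harmless.
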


\begin{lemma} \label{lem:diamlb}
 Let $\power: V\to\bbN$ be a feasible solution for~$\probnameshort$ on~$\langle G,w\rangle$, where $G=(V,E)$. Fix a node~$u\in V$, and let $E_u\subseteq E$ be the set of 
edges that are covered only by~$u$.
Let $\langle G',w'\rangle$ be the expanded graph of $\langle G\setminus E_u, w\rangle$,  and let $q$ be a shortest path between an arbitrary pair of vertices within a connected component 
of~$G'$. Then, 
$\sum_{v\in V} \power(v) \geq \power(u)+\frac{1}{2}w'(q)$.
\end{lemma}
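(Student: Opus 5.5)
The plan is to show that $w'(q)\le 2\sum_{v\neq u}\power(v)$, which gives the claim after adding $\power(u)$. The idea is that every edge of $G\setminus E_u$ is covered by some node other than $u$. Each such cover then becomes a ball in $G'$ that contains the whole expanded star of that edge. Finally, a shortest path meets any ball of radius $r$ in a piece of length at most $2r$.

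\emph{Step 1: covering paths avoid $E_u$.} Take an edge $e=\edge{x}{y}$ of $G\setminus E_u$. By definition of $E_u$, some $v\neq u$ covers $e$, so there is a covering path $r$ from $v$ to $e$ with $w(r)\le\power(v)$. Suppose some edge $f$ of $r$ were in $E_u$. The prefix of $r$ ending with $f$ is then a covering path from $v$ to $f$, since its weight is at most $w(r)\le \power(v)$. So $v$ covers $f$, contradicting that $f$ is covered only by $u$. Hence $r$ lies in $G\setminus E_u$, and its image lies in $G'$ with the same weighted length.

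\emph{Step 2: the whole star of $e$ lies in the ball around $v$.} We work with $G'$ as a metric space in which each edge is a segment. From Step 1, $d_{G'}(v,x)$ and $d_{G'}(v,y)$ are at most $\power(v)$, and $\min(d_{G'}(v,x),d_{G'}(v,y))+w(e)\le\power(v)$. Say the minimum is attained at $x$. The centre of the star is then at distance at most $d_{G'}(v,x)+\tfrac12 w(e)$ from $v$. The leaf is at distance at most $d_{G'}(v,x)+w(e)\le\power(v)$. Every point on the three half-weight segments is therefore within distance $\power(v)$ of $v$, and we call this set of points the star of $e$. Since $G'$ is built from the stars of the edges of $G\setminus E_u$, every point of $G'$ lies in a closed ball $B(v,\power(v))$ for some $v\neq u$.

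\emph{Step 3: an interval covering argument along $q$.} Parametrise $q$ by arc length as $[0,L]$ with $L=w'(q)$. For each $v\neq u$, let $I_v$ be the set of parameters $t$ whose point $q(t)$ lies in $B(v,\power(v))$. Because $q$ is a shortest path, any $s<t$ in $I_v$ satisfy
\[
t-s=d_{G'}\bigl(q(s),q(t)\bigr)\le d_{G'}\bigl(q(s),v\bigr)+d_{G'}\bigl(v,q(t)\bigr)\le 2\power(v).
\]
So $I_v$ is contained in an interval $J_v$ of length at most $2\power(v)$. By Step 2 the sets $J_v$ cover $[0,L]$, and subadditivity of length gives
\[
L\le\sum_{v\neq u} 2\power(v).
\]
Hence $\sum_{v\in V}\power(v)=\power(u)+\sum_{v\neq u}\power(v)\ge\power(u)+\tfrac12 w'(q)$.

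The main obstacle is Step 1. Removing $E_u$ may increase distances, so it must be checked that the balls of the other nodes are still valid in $G'$. The prefix argument handles this. The remaining care is in treating $G'$ continuously (points inside edges) so that the covering in Step 3 is genuinely a covering of $[0,L]$.
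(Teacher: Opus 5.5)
Your proof is correct and follows the paper's argument. Step~1 is the paper's claim that, once $u$'s power is removed, every edge of the component is still covered by some $v\neq u$ through covering paths that avoid $E_u$. Steps~2--3 rest on the same triangle-inequality bound: two points of the shortest path $q$ that are both within reach of $v$ are at most $2\power(v)$ apart.

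The only difference is bookkeeping. The paper greedily partitions $q$ into consecutive sub-paths whose first and last edges share a covering node $v_i$, with distinct $v_i$, and sums $w'(q_i)\le 2\power(v_i)$. You instead cover $[0,w'(q)]$ by one interval $J_v$ per node and use subadditivity of length. That avoids the greedy construction, at the cost of viewing $G'$ as a metric graph in which $q$ is a geodesic.
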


\begin{proof}
Let~$H'$ be the connected component of~$G'$ that contains~$q$ and let~$H$ be the corresponding connected component of~$G\setminus E_u$. Let~$\power'$ be a power assignment defined on the nodes of~$H$ by taking the restriction of~$\power$ to the nodes of~$H$, and setting $\power'(u)=0$. Note that $u$ may be outside of~$H$, in which case~$\power'$ is simply the restriction of~$\power$ to the nodes of~$H$. 

\begin{claim} \label{clm:compH}
$\power'$ is a feasible solution for~$\langle H,w_H\rangle$, where $w_H$ is the restriction of~$w$ to the edges of~$H$. 
\end{claim}

\begin{claimproof}
Consider an arbitrary edge~$e$ of~$H$. By construction of~$G'$, we have $e\notin E_u$, hence in the original instance~$\langle G,w\rangle$, with power assignment~$\power$, $e$ must be covered by some node~$v\neq u$. Since $v$ covers every edge of every covering path from~$v$ to~$e$, it follows that all such covering paths are entirely in component~$H$. Moreover, since~$v\neq u$, we have $\power'(v)=\power(v)$. Therefore, $v$ also covers~$e$ in~$\langle H,w_H\rangle$, under power assignment~$\power'$.
\end{claimproof}

By Claim~\ref{clm:compH} and Observation~\ref{obs:expsol}, $\power'$ is also a feasible solution for~$\langle H',w'_{H'}\rangle$, where~$w'_{H'}$ is the restriction of~$w'$ to the edges in~$H'$.

Now, fix a  partition 
of~$q$ into sub-paths $q_1,\dots,q_\ell$, for some $\ell\geq 1$, with the properties that for each~$i$, the first and the last edge of~$q_i$ are covered by the same 
node~$v_i \neq u$ under~$\power'$, and also $v_i\neq v_j$ for $i\neq j$. We can construct a partition with these properties in the following manner: First, assign to each edge~$e$ of~$q$ a covering node~$v(e)$, choosing one arbitrarily if an edge is covered by multiple nodes. Note that~$v(e)\neq u$, since~$\power'(u)=0$ (or $u$ is not even in~$H$). Let~$e$ be the first edge of~$q$, and let~$e'$ be the last edge of~$q$ with~$v(e')=v(e)$. The first path~$q_1$ of the partition is then the sub-path of~$q$ from~$e$ up to and including~$e'$. In the remaining part of~$q$, all edges are covered by nodes different from~$v(e)$. Clearly, then, repeated application of the above procedure for the remaining part of~$q$ yields a partition with the required properties.
 
To conclude the proof, for a fixed sub-path~$q_i$ with endpoints~$s_i,t_i$, we know that $d(v_i,s_i)\leq \power'(v_i)$ and $d(v_i,t_i)\leq 
\power'(v_i)$ by Observation~\ref{obs:reachendpoints}. 
We conclude that $w'(q_i) = d(s_i,t_i) \leq d(s_i,v_i)+d(v_i,t_i) \leq 2\power'(v_i)$, where the distances refer to distances in~$\langle G',w'\rangle$. 
Therefore, summing over all sub-paths, $w'(q) 
= \sum_{i=1}^\ell w'(q_i) \leq 
2\sum_{i=1}^\ell \power'(v_i) = 2\sum_{i=1}^{\ell} \power(v_i)$. Since all of the $v_i$'s are distinct and different from~$u$, we have:
\begin{equation*}
\sum_{v\in V}\power(v) \geq \power(u) + \sum_{i=1}^{\ell} \power(v_i) \geq \power(u) + \frac{1}{2}w'(q) \enspace.\qedhere
\end{equation*}
\end{proof}

\section{Hardness}\label{sec:hardness}


\begin{theorem} 
    $\probnameshort$ is $\NP$-hard.
\end{theorem}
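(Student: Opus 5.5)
We reduce from \textsf{Minimum Vertex Cover} (or Dominating Set) with a gadget construction that forces every near-optimal solution to be a combination of many separate ``local'' sources, so that merging is never profitable. We use heavy edges to separate the gadgets, since by Lemma~\ref{ref:unit} unit weights are easy.

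\textbf{Construction.} Given a graph $H=(V_H,E_H)$ and a target $k$, build $G$ as follows. Introduce a central hub vertex $c$. For each $x\in V_H$ add a vertex $x$ joined to $c$ by an edge of large weight $M$. For each edge $\edge{x}{y}\in E_H$ add a pendant ``edge gadget'' $g_{xy}$, attached by an edge of weight~$1$ to both $x$ and $y$. Each vertex~$x$ additionally carries a private pendant edge of weight~$1$, which forces some power near~$x$.

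\textbf{Intended correspondence.} Assigning power $2$ to a vertex $x$ covers all edges of its star: the pendant edge, the edges $\edge{x}{g_{xy}}$, and the edges $\edge{g_{xy}}{y}$, the last at cost $d(x,g_{xy})+1=2$. The heavy edges $\edge{x}{c}$ are covered by the single hub $c$ with power $M$. A vertex cover $C$ of size $k$ then yields a solution of cost $M+2k+(\text{baseline})$, where the baseline accounts for forcing some power at every $x$ to cover its private pendant. To make the edge gadgets decide the answer, instead give each $x\notin C$ power~$1$ (this covers its pendant and its incident gadget edges $\edge{x}{g}$) and each $x\in C$ power~$2$. An edge $\edge{g_{xy}}{y}$ is then covered only if $x$ or $y$ has power $2$. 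This gives cost $M+|V_H|+k$.

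\textbf{Soundness (the main obstacle).} Conversely, take a solution of cost at most $M+|V_H|+k$. We must show that it can be normalised into the canonical form above. The first step is to show that the heavy edges force total power about $M$ either at $c$ or spread out. We would use Lemma~\ref{lem:diamlb} on the expanded graph to argue that any solution not using one large source pays at least $M$ extra, provided $M\gg |V_H|$. Next we would show that no vertex can cheaply cover gadgets at distance $\geq 3$, since power $\geq 3$ is strictly worse than local assignments. Finally we perform exchange arguments: move any power placed at a gadget vertex $g_{xy}$ onto $x$, and round powers at vertices of $V_H$ to $\{1,2\}$. The delicate point is ruling out mixed solutions, where the hub's power $M$ leaks back to cover small edges, or where power $M+O(1)$ placed at some $x$ replaces $c$. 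We would handle this by making the spokes long (weight $M$ per spoke), so that covering from $c$ costs exactly $M+1$ for any small edge. This is dominated by local coverage except for $O(1)$ vertices, and we can absorb that slack by taking a disjoint-ish copy of each vertex gadget, or by reducing from \textsf{Vertex Cover} on cubic graphs, where a constant additive error can be amplified away by duplicating $H$ many times along the hub. The set of $x$ receiving power $\geq 2$ is then a vertex cover of size at most $k$.
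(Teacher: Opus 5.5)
There is a genuine gap, and it is in the construction itself, not only in the soundness sketch. The instances you build do not depend on $H$ at all.

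\textbf{First problem: the gadget edges are trivially covered.} The edge $\edge{g_{xy}}{y}$ is incident to $y$, so $\cost{y}(\edge{g_{xy}}{y})=0+1=1$. Every vertex of $V_H$ already needs power $1$ for its private pendant. So giving power $1$ to all of $V_H$ and power $M$ to $c$ is feasible, at cost $M+|V_H|$, whatever $H$ is. Your claim that $\edge{g_{xy}}{y}$ is covered only if $x$ or $y$ has power $2$ is therefore false.

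\textbf{Second problem: the hub alone suffices.} The hub $c$ with power $M+1$ is by itself a feasible solution. We have $d(c,x)=M$ for every $x\in V_H$, and the spokes cost $M$ from $c$. Every unit-weight edge (pendant, $\edge{x}{g_{xy}}$, $\edge{g_{xy}}{y}$) has an endpoint in $V_H$, so each costs exactly $M+1$ from $c$. Hence the optimum is at most $M+1$ for every $H$, and every produced instance is a yes-instance.

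Your soundness paragraph treats this $M+1$ as a per-edge charge to be weighed against local coverage. In $\probnameshort$, however, power acts as a radius. Raising $\power(c)$ from $M$ to $M+1$ costs one unit in total and covers all small edges of all gadgets at once. The same holds for all copies of $H$ if you duplicate it along the hub. No normalisation or exchange argument can recover a vertex-cover correspondence from this instance.

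\textbf{The missing idea.} You need a mechanism that stops a single source from cheaply covering many gadgets. Placing all gadgets at the same distance from a common vertex, with uniform small weights, cannot achieve this.

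\textbf{How the paper does it.} The paper reduces from $\SAT$ and places the gadget of variable $x_i$ at weight scale $2^{i-1}n$.
\begin{itemize}
\item Any vertex outside the gadget needs power at least $2^{i}n$ to cover its critical edges. That is more than the budget remaining for gadgets $1,\dots,i$ once gadgets $i+1,\dots,n$ have been paid for.
\item A reverse induction on $i$ then forces exactly one source per gadget, with power exactly $2^{i-1}n+1$, located at $v_i$ or $\bar v_i$.
\item The extra $+1$ is exactly what lets a literal vertex cross its weight-$2^{i-1}n$ edge to a clause vertex $c_j$ and cover the unit edge $\edge{c_j}{c'_j}$. This is how clause satisfaction is encoded.
\end{itemize}
The exponentially separated scales are what make merging sources unprofitable, which is the property your construction lacks.
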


\begin{proof}
We show that $\SAT$ can be reduced to $\probnameshort$ in polynomial time. Let $\phi$ be a $\SAT$ instance with $n$ variables $\{x_1, x_2,\dots,x_n\}$ and $m$ clauses $\{c_1, c_2,\dots,c_m \}$, where  each variable appears in at least one clause. We construct an $\probnameshort$ instance~$I=\langle G,w\rangle$ as follows (see Figures~\ref{fig:clause} and \ref{fig:variable} for an illustration):
\begin{enumerate}
    \item For each clause $c_j$, we define two vertices~$c_j$, $c'_j$ and an edge $\edge{c_j}{c'_j}$ between these vertices with weight $w\left(\edge{c_j}{c'_j}\right)=1$.
    \item For each variable $x_i$, we define two vertices $v_i$, $v'_i$ to represent the positive occurrence, and an edge  $\edge{v_i}{v'_i}$ between these vertices with weight $w\left(\edge{v_i}{v'_i}\right)=2^{i-1}\cdot n$.
    \item For each variable $x_i$, we define two vertices $\bar{v}_i$, $\bar{v}'_i$ to represent the negative occurrence, and an edge $\edge{\bar{v}_i}{\bar{v}'_i}$ between these vertices with weight $w\left(\edge{\bar{v}_i}{\bar{v}'_i}\right)=2^{i-1}\cdot n$.
    \item For each variable~$x_i$, $G$ also contains the edges $\edge{v_i}{ \bar{v}_i}$ and $\edge{v'_i}{\bar{v}'_i}$, both with weight~$1$.
    \item For every variable~$x_i$ and every clause~$c_j$, $G$ contains an edge $\edge{c_j}{v_i}$ with weight $2^{i-1}\cdot n$ if and only if $x_i$ appears positively in $c_j$.
    \item For every variable~$x_i$ and every clause~$c_j$, $G$ contains an edge $\edge{c_j}{\bar{v}_i}$ with weight $2^{i-1}\cdot n$ if and only if $x_i$ appears negatively in $c_j$.
\end{enumerate}


\begin{figure}[tbp]
\begin{minipage}[t]{0.54\linewidth}
    \centering
\begin{tikzpicture}[scale=0.8,
  v/.style={circle,fill,inner sep=1.6pt},
  lab/.style={fill=white,inner sep=1pt,font=\small}]
\node[v,label=below:{$c_j$}]  (c)  at (0,0.5) {};
\node[v,label=above:{$c'_j$}] (cp) at (0,2.5) {};
\draw (c) -- (cp) node[lab,midway,right] {$1$};
\node[font=\small,align=center] at (0,-1.1) {clause gadget\\ for $c_j$};
\node[v,label=above left:{$v_i$}]        (v)   at (4,2.5)   {};
\node[v,label=above right:{$\bar v_i$}]  (vb)  at (6.5,2.5) {};
\node[v,label=below left:{$v'_i$}]       (vp)  at (4,0)     {};
\node[v,label=below right:{$\bar v'_i$}] (vbp) at (6.5,0)   {};
\draw (v) -- (vb)   node[lab,midway,above] {$1$};
\draw (vp) -- (vbp) node[lab,midway,below] {$1$};
\draw (v) -- (vp)   node[lab,midway,left]  {$2^{i-1}\cdot n$};
\draw (vb) -- (vbp) node[lab,midway,right] {$2^{i-1}\cdot n$};
\node[font=\small,align=center] at (5.25,-1.1) {variable gadget for $x_i$\\ (vertex set $V_i$)};
\end{tikzpicture}
    \caption{Clause and variable gadgets used in the reduction from~$\SAT$ to~$\probnameshort$.}
    \label{fig:clause}  
\end{minipage}\hfill
\begin{minipage}[t]{0.44\linewidth}
    \centering
\begin{tikzpicture}[scale=0.7,
  v/.style={circle,fill,inner sep=1.6pt},
  lab/.style={fill=white,inner sep=1pt,font=\small}]
\node[v,label=above:{$c'_j$}] (cp) at (1.6,6) {};
\node[v,label=below:{$c_j$}]  (c)  at (1.6,3) {};
\draw (cp) -- (c) node[lab,midway,left] {$1$};
\foreach \i/\y in {a/6.9,b/3.7,c/0.5}{
  \node[v,label=above:{$v_{\i}$}]      (v\i)  at (5.5,\y) {};
  \node[v,label=above:{$\bar v_{\i}$}] (vb\i) at (7.5,\y) {};
  \draw (v\i) -- (vb\i) node[lab,midway,below] {$1$};
}
\draw (c) -- (vba)  node[lab,pos=0.55] {$2^{a-1}\cdot n$};
\draw (c) -- (vb) node[lab,pos=0.42] {$2^{b-1}\cdot n$};
\draw (c) -- (vc)  node[lab,pos=0.55] {$2^{c-1}\cdot n$};
\end{tikzpicture}

  \caption{Connections between the clause gadget of $c_j=(\lnot x_a\lor x_b\lor x_c)$ and the variable gadgets of its literals.}
    \label{fig:variable}  
\end{minipage}
\end{figure}

The number of vertices and edges in $G$ is $\mathcal{O}(n+m)$. Furthermore, since the edge weights are bounded by $n2^n$, their binary encoding requires at most $\mathcal{O}(n+\log n)$ bits. Hence, the weighted graph $G$ can be constructed in polynomial time under the bit complexity model.

We claim that $\phi$ is a satisfiable formula if and only if the $\probnameshort$ instance~$I$ has an optimal solution with total power at most $2^{n} \cdot n$.

Indeed, fix a satisfying assignment of~$\phi$ and consider a power assignment~$\power$ on the nodes of~$G$, defined as follows: For each~$i$,
\begin{equation*}
    \power(v_i)=\begin{cases}
        2^{i-1}\cdot n + 1 & \text{if $x_i$ is true} \\
        0 & \text{otherwise}
    \end{cases} \text{\quad\ and\quad}
    \power(\bar{v}_i)=\begin{cases}
     2^{i-1}\cdot n + 1 & \text{if $x_i$ is false} \\
     0 & \text{otherwise}
    \end{cases} \enspace.
\end{equation*}
All other nodes $v$ have zero power: $\power(v)=0$. It is easy to check that $\power$ is a feasible solution for $I$ with total power $2^{n} \cdot n$: The node~$v_i$ or~$\bar{v}_i$ covers its entire variable gadget, including the edges connecting it to the clause gadgets, plus the clause gadgets of the clauses that are satisfied by variable~$x_i$. Since this is a satisfying assignment, every clause is satisfied by some variable; therefore, every edge of~$G$ is covered.

For the converse, let~$\power$ be an optimal solution of~$I$ with total power at most $2^n \cdot n$. For each index~$i$ ($1\leq i\leq n$), we define the vertex set $V_i=\{v_i, v'_i, \bar{v}_i, \bar{v}'_i\}$ and denote three of the edges of the corresponding variable gadget as $e_i=\edge{v'_i}{\bar{v}'_i}$, $\mathsf{pos}_i=\edge{v_i}{v'_i}$, and $\mathsf{neg}_i=\edge{\bar{v}_i}{\bar{v}'_i}$. We call $\mathsf{pos}_i$ and $\mathsf{neg}_i$ the \emph{critical edges} of the gadget.  By construction of the graph~$G$, $V_i$ is accessible from the outside only via an edge of weight~$2^{i-1}\cdot n$, connecting~$V_i$ to a clause gadget (see Figure~\ref{fig:variable}). Hence, we have the following property:
\begin{property} \label{prop:costsext}
For every~$i$ and for every node $v \notin V_i$, we have $\cost{v}(e_i) \ge 2^i \cdot n + 1$, $\cost{v}(\mathsf{pos}_i) \ge 2^i \cdot n$, and $\cost{v}(\mathsf{neg}_i) \ge 2^i \cdot n$.
\end{property}

 By Property~\ref{prop:costsext}, any vertex outside $V_i$ would require power at least $2^i n$
to cover the critical edges of the gadget. Claim~\ref{clm:reduction} states that exactly one vertex in each $V_i$ receives power $2^{i-1}n+1$, and all others receive zero.

\begin{claimrep} \label{clm:reduction}
    For every $i$, there exists exactly one vertex $v^\star_i\in V_i$ with $\power(v^\star_i) \geq 2^{i-1}\cdot n + 1$, 
and for all $x \in V_i \setminus \{v^\star_i\}$ we have $\power(x)=0$.
\end{claimrep}
\begin{proofsketch}
By Property~\ref{prop:costsext}, any vertex outside $V_i$ would require power at least $2^i n$ 
to cover the critical edges of the gadget. Since the total power is bounded by $2^n n$, we argue 
in decreasing order of $i$. For $V_n$, any external coverage would already consume the entire 
budget, so $V_n$ must be covered internally. Subtracting its exact contribution, the same 
argument applies inductively to $V_{n-1}, \dots, V_1$. In particular, covering the edges of weight $2^{i-1}n$ inside $V_i$ requires assigning power at 
least $2^{i-1}n+1$ to some vertex of the gadget. Moreover, because the global budget is exactly 
$\sum_{i=1}^n (2^{i-1}n+1)=2^n n$, exactly one vertex in each $V_i$ receives power $2^{i-1}n+1$, 
while all others receive zero, proving Claim~\ref{clm:reduction}.
\end{proofsketch}
\begin{proof} 
    We prove the claim by reverse induction on~$i$, starting with~$i=n$.
    
    We first observe that all three edges $e_n$, $\mathsf{pos}_n$, $\mathsf{neg}_n$ must be covered only from within~$V_n$. Indeed, by Property~\ref{prop:costsext}, if $v\notin V_n$ covers~$e_n$, then it must have $\power(v)\geq 2^n\cdot n+1$, which is more than the total power of~$\power$. Also by Property~\ref{prop:costsext}, if $v\notin V_n$ covers any of~$\{\mathsf{pos}_n,\mathsf{neg}_n\}$, then $\power(v)\geq 2^n\cdot n$, which means that all available power must be placed on~$v$ and hence~$e_n$ will be left uncovered.
    Furthermore, $\mathsf{pos}_n$ and $\mathsf{neg}_n$ must both be covered from the same node in~$V_n$, otherwise we are forced to have two nodes in~$V_n$, each with power $2^{n-1}\cdot n$, all other nodes of~$G$ with zero power, and, in particular, at least one edge~$\edge{c_j}{c'_j}$, where~$c_j$ is a clause that contains~$x_n$, that remains uncovered under~$\power$.
    
    It follows that exactly one node~$v^\star_n\in V_n$ must have power at least~$2^{n-1}\cdot n + 1$, since by the assumption on the cost of~$\power$, it is impossible to have more than one such node. Regardless of its exact position in~$V_n$, this node covers all four edges in the subgraph induced by~$V_n$.
    By optimality of~$\power$, all other nodes in~$V_n$ must have zero power. Indeed, any other node would have power at most~$2^{n-1}\cdot n - 1$, hence it would be unable to cover any new edge outside of the subgraph induced by~$V_n$.
    

    For the inductive step, fix $n_0\leq  n-1$ and assume the claim holds for all~$i\geq n_0+1$. Let $V^\star=\{v^\star_i:i \geq n_0+1\}$. By the inductive hypothesis, the total power on $V^\star$ is at least $\sum_{i=n_0+1}^{n} \bigl(2^{i-1}\cdot n + 1\bigr) = n\cdot (2^n-2^{n_0}) + n-n_0$. By the assumption on the cost of~$\power$, the available power that can be assigned to nodes outside of~$V^\star$ is $P\leq 2^n\cdot n - n\cdot (2^n-2^{n_0}) - n + n_0 = 2^{n_0}\cdot n - (n-n_0) \leq 2^{n_0}\cdot n-1$. 

   We now show that all three edges $e_{n_0}$, $\mathsf{pos}_{n_0}$, $\mathsf{neg}_{n_0}$ must be covered only from within~$V_{n_0}$. Indeed, suppose that $v\notin V_{n_0}$ covers any of $\{e_{n_0}, \mathsf{pos}_{n_0}, \mathsf{neg}_{n_0}\}$. By Property~\ref{prop:costsext} we must have $\power(v) \geq 2^{n_0}\cdot n$. Since $\power(v)>P$, $v$ must be in~$V^\star$. Suppose $v=v^\star_j$, for some $j\geq n_0+1$. By construction of~$G$, for $v^\star_j$ to cover~$e_{n_0}$ it must have power $\power(v^\star_j)\geq 2^{j-1}\cdot n + 2\cdot 2^{n_0-1}\cdot n + 1=2^{j-1}\cdot n + 2^{n_0}\cdot n + 1$. Therefore, the total power on~$V^\star$ is now at least $n\cdot (2^n-2^{n_0})+n-n_0 + 2^{n_0}\cdot n = n \cdot 2^n + n-n_0 \geq n \cdot 2^n + 1$, which is impossible. Moreover, for $v^\star_j$ to cover any of $\{\mathsf{pos}_{n_0}, \mathsf{neg}_{n_0}\}$, it must have power $\power(v^\star_j) \geq 2^{j-1}\cdot n + 2\cdot 2^{n_0-1}\cdot n=2^{j-1}\cdot n + 2^{n_0}\cdot n$. Therefore, the total power on~$V^\star$ is now at least $n\cdot (2^n-2^{n_0})+n-n_0 -1 + 2^{n_0}\cdot n = n \cdot 2^n + (n-1)-n_0$. If $n_0<n-1$, the total power on~$V^\star$ is greater than $n\cdot 2^n$, which is impossible. Therefore, the only way for $v^\star_j$ to cover one of $\{\mathsf{pos}_{n_0}, \mathsf{neg}_{n_0}\}$ is if $n_0=n-1$, in which case $v=v^\star_n$, $\power(v)=n\cdot 2^n$, and no other node is in the support of~$\power$, which leaves at least edge~$e_{n-1}$ uncovered.

   Additionally, $\mathsf{pos}_{n_0}$ and $\mathsf{neg}_{n_0}$ must both be covered from the same node in~$V_{n_0}$ since there is not enough available power to create two nodes with power~$2^{n_0-1}\cdot n$. It follows that exactly one node~$v^\star_{n_0}\in V_{n_0}$ must have power at least~$2^{n_0-1}\cdot n + 1$, and this node covers all four edges in the subgraph induced by~$V_{n_0}$. All other nodes in~$V_{n_0}$ must, therefore, have power at most~$P-\power(v^\star_{n_0}) \leq 2^{n_0}\cdot n - 1 - (2^{n_0-1}\cdot n+1) < 2^{n_0-1} \cdot n$, hence they are unable to cover any new edge outside of the subgraph induced by~$V_{n_0}$. By optimality of~$\power$, their power must be~$0$.
\end{proof}    

By Claim~\ref{clm:reduction}, the total power assigned to nodes in $\{v^\star_i: 1\leq i\leq n\}$ is 
\[
\sum_{i=1}^n \power(v^\star_i) \geq \sum_{i=1}^n (2^{i-1}\cdot n + 1) = 2^n\cdot n \enspace.
\]
As this is tight with respect to the assumed cost of~$\power$, all other nodes must have power~$0$ and all inequalities guaranteed by Claim~\ref{clm:reduction} must hold with equality. Therefore, the support of~$\power$ is exactly the set $\{v^\star_i: 1\leq i\leq n\}$ and, for every~$i$, $\power(v^\star_i) = 2^{i-1}\cdot n + 1$. 

Moreover, for every~$i$, $v^\star_i \in \{v_i,\bar{v}_i\}$. Indeed, by construction of~$G$, and given that $\power(v^\star_i) = 2^{i-1}\cdot n + 1$ for every~$i$, it follows that the edges connecting $v_i$ or $\bar{v}_i$ to the different clause nodes~$c_j$, for the clauses that contain variable~$x_i$, may only be covered by~$v^\star_i$. However, these edges are not covered if~$v^\star_i\in\{v'_i,\bar{v}'_i\}$.

Now, given~$\power$, we define the truth assignment in which $x_i$ is set to true if and only if $\power(v_i)>0$.
This assignment satisfies $\phi$. Indeed, consider any clause $c_j$. Since~$\power$ is a feasible solution, the edge~$\edge{c_j}{c'_j}$ must be covered. If it is covered by some node~$v_i$, then the edge~$\edge{v_i}{c_j}$ must be present, since otherwise node~$c_j$ is too far from~$v_i$. Therefore, by construction of~$G$, the variable~$x_i$ must appear positively in~$c_j$. By definition of the truth assignment, $c_j$ is satisfied. Similarly, if $\edge{c_j}{c'_j}$ is covered by some node~$\bar{v}_i$, then $x_i$ appears negatively in~$c_j$ and the clause is satisfied.
\end{proof}

\section{A continuous version of $\probnameshort$ ($\probnamecontshort$)}
\label{sec:continuous}

We introduce the continuous version of~$\probname$, in which power can be assigned at interior points of edges.

Given a simple, undirected, connected graph $G=(V,E)$ with edge weights $w:E\to\bbN$, consider a triple $(u,e,x)$, where $u\in V$, $e\in E$ is incident to~$u$, and $x\in\bbR^+$ satisfies $0\leq x\leq w(e)$. Such a triple represents a point \emph{inside} edge~$e$ at distance~$x$ from~$u$. Note 
that the point may coincide with one of the endpoints of~$e$ when $x=0$ or $x=w(e)$.  A single point may have multiple representations by different triples.

Let $\calP$ denote the set of points in~$G$.
We extend the definition of the shortest-path distance $d(a_1,a_2)$ to points $a_1,a_2\in\mathcal{P}$.  Specifically,  
if $a_1=(u,e_1,x)$ and $a_2=(r,e_2,y)$, where $e_1=\edge{u}{v}$ and $e_2=\edge{r}{z}$, then
\begin{equation*}
\begin{aligned}
d(a_1,a_2) = \min \Bigl\{ 
& x + d(u,r) + y,\:\:\: x + d(u,z) + (w(e_2)-y),  \\
& (w(e_1)-x) + d(v,r) + y,\:\:\: (w(e_1)-x) + d(v,z) + (w(e_2)-y)
\Bigr\}.
\end{aligned}
\end{equation*}

If $a=\bigl(u,\edge{u}{v},x\bigr)\in\calP$, then we define the power required at~$a$  to cover edge~$f=\edge{y}{z}$ as:
\begin{equation*}
\cost{a}(f) = 
\begin{cases}
\min\bigl(d(a,y),d(a,z)\bigr) + w(f), & \text{if } f\neq \edge{u}{v}, \\[6pt]
\max\bigl(d(a,u), d(a,v)\bigr),       & \text{if } f=\edge{u}{v}.
\end{cases}
\end{equation*}
In words, for $a$ to cover~$f$, it must have enough power to traverse~$f$ if $a$ is outside of~$f$, or to reach every point of~$f$ if $a$ is inside~$f$. Note that, if $x=0$, then $\cost{a}(f)=\cost{u}(f)$.

\begin{observation} \label{obs:cont:reachendpoints}
    An analogue of Observation~\ref{obs:reachendpoints} holds for~$\probnamecontshort$. If $a$ covers~$f=\{y,z\}$, then $\power(a)\geq d(a,y)$ and $\power(a)\geq d(a,z)$, regardless of whether $a$ is inside or outside of~$f$.
\end{observation}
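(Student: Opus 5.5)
The statement to be justified is Observation~\ref{obs:cont:reachendpoints}: if a point $a$ covers $f=\edge{y}{z}$, then $\power(a)\geq d(a,y)$ and $\power(a)\geq d(a,z)$. I will split according to the two cases in the definition of $\cost{a}(f)$. Covering means $\power(a)\geq \cost{a}(f)$, so it suffices to show that $\cost{a}(f)\geq \max\bigl(d(a,y),d(a,z)\bigr)$ in each case.

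\emph{Case $a$ inside $f$}, that is, $a=(u,f,x)$ with $f=\edge{u}{v}$ and $\{u,v\}=\{y,z\}$. By definition $\cost{a}(f)=\max\bigl(d(a,u),d(a,v)\bigr)$, which is exactly $\max\bigl(d(a,y),d(a,z)\bigr)$. Nothing more is needed here.

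\emph{Case $a$ outside $f$.} Here $\cost{a}(f)=\min\bigl(d(a,y),d(a,z)\bigr)+w(f)$. Suppose without loss of generality that the minimum is attained at $y$, so $d(a,y)\le d(a,z)$. Then $\cost{a}(f)\ge d(a,y)$ holds trivially. For the other endpoint, I will use the triangle inequality for the extended distance on $\calP$, which gives $d(a,z)\le d(a,y)+d(y,z)\le d(a,y)+w(f)=\cost{a}(f)$.

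The only point that needs care is checking that the extended distance satisfies the triangle inequality when one argument is the point $a$ and the others are vertices. When $y$ and $z$ are vertices, represent them as $(y,f,0)$ and $(z,f,0)$. From the formula, $d(a,y)=\min\bigl(x+d(u,y),\,(w(e_1)-x)+d(v,y)\bigr)$, and similarly for $z$. The inequality then follows from the ordinary graph triangle inequality $d(u,z)\le d(u,y)+w(f)$, applied termwise to each branch of the minimum. This is routine, so there is no real obstacle: the whole argument is a direct unpacking of the definition of $\cost{a}$.
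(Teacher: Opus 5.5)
Your argument is correct and is essentially the proof the paper leaves implicit (the observation is stated without proof). In the inside case the bound holds by the definition of $\cost{a}(f)$. In the outside case it follows from $d(a,z)\le d(a,y)+w(f)$, which you rightly verify term by term from the distance formula for points.
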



We consider the continuous version of $\probname$, in which we 
seek to compute an assignment $\power:\calP\to\bbR^+$ of power with finite support $\calS=\{a\in\calP:\power(a)>0\}$, so that for every edge~$f$ there exists 
a point~$a$ with $\power(a)\geq \cost{a}(f)$.

\begin{problem}[$\probnamecont$, $\probnamecontshort$] $ $ \\
  \emph{Instance}: $\langle G, w\rangle$, where $G=(V, E)$ is a simple, undirected, connected, weighted graph with edge weights $w: E\to\bbN$. \\
  \emph{Feasible solution}: a power assignment $\power:\calP\to\bbR^+$, where $\calP$ is the set of points in~$G$, such that $\power$ has finite 
support~$\calS$ and for every~$e\in E$, there exists~$a\in \calP$ with $\power(a)\geq \cost{a}(e)$. \\
  \emph{Goal}: minimize $\sum_{a\in\calS} \power(a)$.
\end{problem}

In the remainder of this section, we analyze properties of optimal solutions with a focus on the size of their support to establish that the problem is solvable in polynomial time. Specifically, we show that there exists an optimal solution with support of size 1. Moreover, the single point in the support of that optimal solution belongs to a polynomial-time computable set of candidate points, which allows us to find an optimal solution in polynomial time.

\begin{lemmarep} \label{prop:cont:mergesources}
 Let $\power:\calP\to\bbR^+$ be an optimal solution for $\probnamecontshort$ on~$\langle G,w\rangle$, with support~$\calS$, $|\calS|\geq 2$, and let $a$, $b$ be two distinct 
points 
in~$\calS$ with $\power(a)+\power(b)\geq d(a,b)$. Then, there exists another optimal solution for $\probnamecontshort$ on~$\langle G,w\rangle$ with strictly smaller support.
\end{lemmarep}

\begin{proofsketch}
    We construct a new power assignment~$\power'$ by shifting the powers of $a$ and $b$ to a single point $c$ on the shortest path between them, setting $\power'(c) = \power(a) + \power(b)$ and $\power'(a)=\power'(b)=0$. This strictly reduces the support without increasing the total cost. To see that $\power'$ remains a feasible cover, we apply the triangle inequality: the combined power at $c$ provides sufficient budget to traverse the shortest path to either $a$ or $b$, and subsequently cover the remaining distance to any point previously reachable by them. 
\end{proofsketch}

\begin{proof}
Let $q$ be a shortest path between $a$ and $b$, and let~$c$ be the point on~$q$ at distance $\power(a)$ 
from~$b$, or $c=a$ if $\power(a) \geq d(a,b)$.

We define a power assignment~$\power':\calP\to\bbR^+$ as follows:
\begin{itemize} 
\item If $c=a$, then $\power'$ is identical to~$\power$, except that $\power'(a)=\power(a)+\power(b)$ and $\power'(b)=0$.
\item If $c\neq a$, then $\power'$ is identical to~$\power$, except that $\power'(c)=\max\{\power(c), \power(a)+\power(b)\}$ and $\power'(a)=\power'(b)=0$.
\end{itemize}
Note that $\power'$ has at most the same cost as~$\power$ and strictly smaller support. To show that $\power'$ is a feasible solution (and, therefore, optimal),    it suffices to show that every point reachable under~$\power$ from~$a$ (resp.\ $b$), can still be reached  under~$\power'$ from~$c$ by first going to~$a$ (resp.\ $b$). Indeed, this means that all covering paths from~$a$ (resp.\ $b$) are maintained under~$\power'$, with an extra prefix of a shortest path from~$c$ to~$a$ (resp.\ $b$). Note that the power assigned to~$c$ does not decrease, therefore the covering paths from~$c$ are always maintained under~$\power'$.

If $c=a$, let~$x$ be reachable from~$b$ under~$\power$, hence $d(b,x)\leq \power(b)$. Note that $c=a$ implies $d(a,b)\leq 
\power(a)$. We have, then: $d(a,b)+d(b,x) \leq \power(a) +  \power(b) =\power'(a)$. Therefore, $x$ can be reached under~$\power'$ from~$a$ by first going to~$b$.

If $c\neq a$, let $x$ be reachable from~$a$ under~$\power$, hence $d(a,x)\leq \power(a)$. Note that $c\neq a$ implies $d(c,b)=\power(a)<d(a,b)$. We have, then: $d(c,a) + d(a,x) = d(a,b)-d(c,b)+d(a,x) \leq \power(a)+\power(b) -\power(a)+\power(a) \leq  \power'(c)$.

Similarly, let $x$ be reachable from~$b$ under~$\power$, hence $d(b,x)\leq \power(b)$. We have, then: $d(c,b)+d(b,x) \leq \power(a) + \power(b) \leq \power'(c)$.
 \end{proof}

\begin{lemma} \label{prop:cont:onesource} 
There exists an optimal solution to $\probnamecontshort$ on~$\langle G,w\rangle$ whose support has size~$1$.
\end{lemma}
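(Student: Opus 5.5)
Take an optimal solution~$\power$ of $\probnamecontshort$ whose support~$\calS$ has minimum size among optimal solutions. Such an optimum exists: restrict to supports of size at most~$|E|$ (one covering point per edge suffices), and then a compactness argument over finitely many edge choices applies. Alternatively, argue directly on an arbitrary feasible solution by showing that its cost can be matched by a single point; this is the cleaner route, and I follow it with "optimal" replaced by "of minimum support among solutions of no larger cost". Suppose for contradiction that $|\calS|\ge 2$. By Lemma~\ref{prop:cont:mergesources}, every two distinct $a,b\in\calS$ satisfy $\power(a)+\power(b)<d(a,b)$. The goal is to contradict this using connectivity of~$G$ and the covering requirement, mirroring the proof of Lemma~\ref{ref:unit}, but now exploiting that points can be interior to edges.

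Fix distinct $a,b\in\calS$ and a shortest path~$q$ from~$a$ to~$b$, viewed as a continuous curve in the metric graph. Let $t$ be the point of~$q$ at distance $\power(a)$ from~$a$, and let $t'$ be a point slightly beyond~$t$. Since $\power(a)<d(a,b)$, the point $t'$ lies strictly inside~$q$ and is not reachable from~$a$. Point $t'$ lies in some edge~$f$, and $f$ is covered by some $z\in\calS$. By Observation~\ref{obs:cont:reachendpoints} together with the definition of $\cost{z}$ (power suffices to traverse~$f$ or to reach all of~$f$), $z$ reaches every point of~$f$, and in particular $d(z,t')\le\power(z)$. I would then consider three cases. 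If $z=a$, we contradict $d(a,t')>\power(a)$, provided $t'$ was chosen to avoid that, i.e.\ $t'$ strictly farther than $\power(a)$ along a shortest path. If $z=b$, then $d(a,b)\le d(a,t')+d(t',b)$, and letting $t'\to t$ gives $d(a,b)\le \power(a)+\power(b)$, a contradiction. Otherwise $z\notin\{a,b\}$, and $d(a,z)\le d(a,t')+d(t',z)\le \power(a)+\varepsilon+\power(z)$.

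The main obstacle is the $\varepsilon$ slack in the last two cases: we need a non-strict inequality, but $t'$ is strictly beyond~$t$. To handle it, I would use finiteness. Only finitely many edges contain points arbitrarily close to~$t$ beyond it along~$q$, in fact a single edge~$f$ once $t'$ is close enough, and $\calS$ is finite. So some fixed $z$ covers $f$ for all $t'$ in a right-neighbourhood of~$t$. Then $d(z,t')\le\power(z)$ for those $t'$, and by continuity of distance $d(z,t)\le\power(z)$. The case $z=a$ can occur only if $a$ itself covers~$f$; but then $a$ reaches all points of~$f$, including points beyond~$t$ on the shortest path~$q$, contradicting the choice of~$t$. (One should check that reaching every point of~$f$ from~$a$ means distance at most~$\power(a)$ to each such point, which follows from the definition of cost in both the inside and outside cases.) Thus $z\ne a$ and $d(a,z)\le d(a,t)+d(t,z)\le\power(a)+\power(z)$, contradicting Lemma~\ref{prop:cont:mergesources} applied to the pair $a,z$.

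Since every case yields a pair violating the conclusion drawn from Lemma~\ref{prop:cont:mergesources}, minimality of~$|\calS|$ forces $|\calS|=1$, and the support-size-1 solution obtained is optimal because merging never increases cost. Beyond the limiting step, I would also double-check the existence of an optimal solution. I would sidestep it by noting that the single-point optimum is attained: the map $a\mapsto\max_f\cost{a}(f)$ is continuous on each closed edge, so it attains a minimum, and the reduction above shows that every feasible solution costs at least that minimum.
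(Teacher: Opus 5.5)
Your argument is correct, but it takes a different route from the paper's. The paper fixes a single support point $a$ and uses optimality twice: so that $a$ covers some edge and hence reaches some vertex $u$, and so that some edge $f$ is not covered by $a$. It then walks along a vertex path from $u$ to an endpoint of $f$ unreachable from $a$, and finds an edge $\{s,t\}$ straddling the boundary of $a$'s reach. The point covering $\{s,t\}$ reaches the vertex $s$ by Observation~\ref{obs:cont:reachendpoints}, and the triangle inequality at $s$ feeds Lemma~\ref{prop:cont:mergesources}.

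You instead take two support points and let Lemma~\ref{prop:cont:mergesources} itself supply $d(a,b)>\power(a)+\power(b)$. This guarantees unreachable points on a shortest $a$--$b$ path, so the frontier point at distance exactly $\power(a)$ immediately yields the straddling edge, exactly as in the proof of Lemma~\ref{ref:unit}.

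This buys you three things. You do not need the two optimality claims. You also avoid the paper's step that an uncovered edge must have an endpoint unreachable from $a$. That step invokes Observation~\ref{obs:cont:reachendpoints} in the converse direction, though it is easily patched: if both endpoints are reachable, the point covering $f$ already satisfies the merge condition through that endpoint. Finally, your treatment of existence is more careful than the paper's, which simply assumes that an optimal solution of minimum support exists.

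The price is that your frontier point $t$ may lie in the interior of $f$. You therefore need that a covering point reaches every point of $f$, not only its endpoints, which is more than Observation~\ref{obs:cont:reachendpoints} gives. This holds under the paper's verbal definition of covering and under the formula $\max(x,w(e)-x)$ used in the proof of Theorem~\ref{prop:cont:solve}. Whenever $a$ is not an interior point of $f$, you can avoid it altogether by using the endpoint through which $q$ enters $f$.

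One small point to make explicit: you apply Lemma~\ref{prop:cont:mergesources} to solutions that are only known to have minimum support among solutions of no larger cost, not to optimal ones. That use is justified by the lemma's proof, which produces a feasible $\power'$ of no larger cost and strictly smaller support, rather than by its statement.
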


\begin{proof}
 Let $\power:\calP\to\bbR^+$ be an optimal solution with minimum support size, and let~$\calS$ be the support of~$\power$. For a contradiction, suppose that 
$|\calS|\geq 2$ and fix $a\in\calS$. 

\begin{claim}
    At least one node, different from~$a$, must be reachable from~$a$.
\end{claim}

\begin{claimproof}
    Since~$\power$ is optimal, $a$ must cover at least one edge~$e$, otherwise we could remove~$a$ from the support and obtain a feasible solution with smaller cost, contradicting the optimality of~$\power$. By Observation~\ref{obs:cont:reachendpoints}, both endpoints of~$e$ are reachable from~$a$. 
\end{claimproof}

\begin{claim}
    At least one edge is not covered by~$a$.
\end{claim}

\begin{claimproof}
    If $a$ covers all edges, then we can remove from the support every node other than~$a$ and obtain a feasible solution with smaller cost. This contradicts the optimality of~$\power$.
\end{claimproof}

By the above claims, let~$u\neq a$ be a node reachable from~$a$, and let $f=\{y,z\}$ be an edge that is not covered by~$a$. By Observation~\ref{obs:cont:reachendpoints}, at least one of the endpoints of~$f$ must be unreachable from~$a$. Without loss of generality, assume $y$ is unreachable from~$a$.

Since the graph is connected, on a path from~$u$ to~$y$ there must exist two adjacent nodes~$s,t$ such that $s$ is reachable from~$a$, hence $\power(a)\geq d(a,s)$, and $t$ is unreachable from~$a$. Moreover, by Observation~\ref{obs:cont:reachendpoints}, $a$ does not cover the edge~$\{s,t\}$. There must exist, then, some point $b\in\calS$ that covers~$\{s,t\}$, hence $\power(b)\geq d(b,s)$. By the triangle inequality, we have $d(a,b) \leq d(a,s)+d(b,s) \leq \power(a)+\power(b)$.
 Lemma~\ref{prop:cont:mergesources} applies, and we obtain an optimal solution with smaller support, which contradicts our choice of~$\power$.
\end{proof}

\begin{theoremrep} \label{prop:cont:solve} 
An optimal solution with support of size~$1$ for $\probnamecontshort$ can be computed in $\calO(m^4)$ time, where $m=|E|$.
\end{theoremrep}

\begin{proofsketch}
Given a $\probnamecontshort$ instance~$\langle G,w\rangle$ with $G=(V,E)$, we compute in polynomial time, inside every edge~$e$, a set of candidate points~$\calI(e)$. We then prove that one of the points 
in~$\bigcup_{e\in E}\calI(e)$ must be the support of an optimal solution.
The algorithm checks all candidates in 
$\bigcup_{e\in E}\calI(e)$ and outputs the one that needs the least power to cover all~edges. The set $\bigcup_{e\in E}\calI(e)$ has $\calO(m^3)$ points, and once all-pairs shortest paths are known, each point is evaluated in $\calO(m)$ time, hence the $\calO(m^4)$ bound.
\end{proofsketch}

\begin{proof}
Given an instance~$\langle G,w\rangle$ of $\probnamecontshort$ with $G=(V,E)$, we show how to compute in polynomial time, for every edge~$e$, a set of candidate points~$\calI(e)$ inside~$e$. We then prove that one of the points 
in~$\bigcup_{e\in E}\calI(e)$ is the support of an optimal solution with support of size~$1$.
The algorithm simply tries all points in
$\bigcup_{e\in E}\calI(e)$ and outputs  one that needs the minimum power to cover all edges.

 Fix an arbitrary edge $e=\edge{u}{v}$ and consider the function~$a:[0,w(e)]\to\calP$ with~$a(x)=(u,e,x)$. For an edge~$f\in E$, we are interested in~$\cost{a(x)}(f)$  as a function of~$x$. The following properties can be obtained by elementary arguments: 

 \begin{itemize}
    \item If $f=\edge{y}{z}\neq e$, we rewrite the 
covering cost of~$f$ from~$a(x)$ as:
\begin{align*}
 \cost{a(x)}(f) & = w(f) + \min\bigl(x+d(u,y),\, x+d(u,z),\, w(e)-x+d(v,y),\, w(e)-x+d(v,z) \bigr) \\
        & = w(f) + \min\Bigl(x+\min\bigl(d(u,y),d(u,z)\bigr),\, w(e)-x+\min\bigl(d(v,y),d(v,z)\bigr) \Bigr) \\
        & = \min\bigl(x+\cost{u}(f),\, w(e)-x+\cost{v}(f)\bigr)
\end{align*}
As a function of~$x$,  $\cost{a(x)}(f)$ is continuous, increasing at a slope of~$1$ for $x\in\bigl[0,  
x_f^\star\bigr)$, and decreasing at a slope of~$-1$ for $x\in\bigl(x_f^\star,w(e)\bigr]$, where $x_f^\star = \frac{1}{2} 
\bigl(\cost{v}(f)-\cost{u}(f)+w(e)\bigr) \in [0,w(e)]$. 

\item If $f=e$, we rewrite $\cost{a(x)}(e)$ as 
\(\cost{a(x)}(e) = \max\bigl(x, w(e)-x\bigr)\).
In this case, the function $\cost{a(x)}(e)$ is continuous, decreasing at a slope of~$-1$ for $x\in\bigl[0,x_e^\star\bigr)$, and increasing at a slope of~$1$ for 
$x\in\bigl(x_e^\star,w(e)\bigr]$, where $x_e^\star=\frac{1}{2}w(e)$. 
\end{itemize}

We conclude that for every~$f\in E$, the 
function~$\cost{a(\cdot)}(f):\bigl[0,w(e)\bigr]\to\bbR^+$ is fully specified by the values $\cost{a(0)}(f)=\cost{u}(f)\in\bbN$, $\cost{a(w(e))}(f)=\cost{v}(f)\in\bbN$, and 
$x_f^\star\in\bbQ$, which can all be computed exactly in polynomial time.

The collection of functions~$\calF_e=\bigl\{\cost{a(\cdot)}(f) : f\in 
E\bigr\}$ contains $|E|=\calO(|V|^2)$ functions. Moreover, in view of the particular form of the functions in~$\calF_e$, each pair of 
functions~$\mathfrak{f},\mathfrak{g}\in\calF_e$ has at most one crossing point\footnote{An $x_0\in\bigl(0,w(e)\bigr)$ is a \emph{crossing point} of 
$\mathfrak{f},\mathfrak{g}\in\calF_e$ if the function $\mathfrak{f}-\mathfrak{g}$ changes sign at~$x_0$, i.e., 
there exists $\epsilon>0$ such that $\mathfrak{f}-\mathfrak{g}$ is nonzero and has a constant sign in~$(x_0-\epsilon,x_0)$, is nonzero and has a constant sign in~$(x_0,x_0+\epsilon)$, and these signs are different. Note that, by continuity of $\mathfrak{f},\mathfrak{g}$, it must hold that $\mathfrak{f}(x_0)=\mathfrak{g}(x_0)$.} 
$x_{\mathfrak{f},\mathfrak{g}}^\star$ and this can be computed exactly in 
polynomial time from the values of~$\cost{u}(f)$, $\cost{v}(f)$, $x_f^\star$, $\cost{u}(g)$, $\cost{v}(g)$, and $x_g^\star$.

We define $\calI(e)=\{u,v\}\cup \bigl\{a(x_{\mathfrak{f},\mathfrak{g}}^\star):\mathfrak{f},\mathfrak{g}\in\calF_e\bigr\} 
\cup \bigl\{a(x_{f}^\star) : f\in E \bigr\}$, 
which contains $\calO(n^4)$ elements and can be computed in polynomial time.

\begin{claim}
    If $s$ is an interior point of~$e$ with $s\notin\calI(e)$, then there exists $s'\in\calI(e)$ with $\max_{f\in E} \cost{s'} (f) < \max_{f\in E}\cost{s}(f)$.
\end{claim}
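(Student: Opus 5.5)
Consider the upper envelope $F(x)=\max_{f\in E}\cost{a(x)}(f)$ on $[0,w(e)]$, and write $s=a(x_0)$ with $x_0\in(0,w(e))$. The plan is to show that, because $s\notin\calI(e)$, the function $F$ is strictly monotone (slope $+1$ or $-1$) on a neighbourhood of $x_0$. We then slide $x$ in the decreasing direction until we first hit a point of $\calI(e)$. Along the way $F$ keeps strictly decreasing, which gives an $s'\in\calI(e)$ with $F$-value strictly below $F(x_0)$.

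First, a local structure step. Each $\mathfrak{f}\in\calF_e$ is piecewise linear with slopes in $\{+1,-1\}$, and its only breakpoint is at $x_f^\star$. Since $x_0\notin\{x_f^\star : f\in E\}$, every function in $\calF_e$ is linear with slope $\pm1$ near $x_0$. Let $M\subseteq\calF_e$ be the set of functions attaining the maximum at $x_0$. Suppose two functions in $M$ had different slopes. Two linear pieces with slopes $+1$ and $-1$ that agree at $x_0$ cross there, so their difference changes sign at $x_0$ and $x_0$ would be a crossing point in $\calI(e)$, a contradiction. Hence all functions in $M$ share a common slope $\sigma\in\{+1,-1\}$.

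By finiteness and continuity, functions outside $M$ stay strictly below the maximum near $x_0$. So on a small neighbourhood, $F$ coincides with the common linear piece of slope $\sigma$. Moving $x$ in direction $-\sigma$ therefore strictly decreases $F$ at rate $1$.

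Next comes the global sliding step, which I expect to be the main obstacle. Let $x_1$ be the nearest point of $\{0,w(e)\}\cup\{x_f^\star\}\cup\{x^\star_{\mathfrak f,\mathfrak g}\}$ in direction $-\sigma$ from $x_0$. Such a point exists because $\calI(e)$ contains both endpoints, and $a(x_1)\in\calI(e)$. I need to show $F$ is strictly monotone on the whole open interval between $x_0$ and $x_1$. The argument above applies at every point of that open interval, since none lies in $\calI(e)$, so $F$ has a locally constant slope at each such point. A continuous function whose slope is locally constant and equal to $\pm1$ on a connected interval has a single global slope there. Matching slopes at neighbouring points would otherwise force a switch from $+1$ to $-1$, and such a switch happens only at a breakpoint or a crossing point, which the interval excludes. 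So the slope is constantly $\sigma$ on the interval, and $F$ strictly decreases from $x_0$ to $x_1$.

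Setting $s'=a(x_1)$, continuity of $F$ gives $F(x_1)=F(x_0)-|x_1-x_0|<F(x_0)$, which proves the claim.

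Two subtleties need care. The first is that ties between functions that merely touch without crossing do not break the argument. This holds because functions with equal slope through the same point coincide locally. The second is the special function $\cost{a(\cdot)}(e)$, which has the opposite convexity. It is still piecewise linear with slopes $\pm1$ and breakpoint $x_e^\star\in\calI(e)$, so it is handled by the same argument.
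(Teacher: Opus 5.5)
Your proof is correct and follows essentially the same route as the paper's: between two consecutive points of $\calI(e)$ there is no breakpoint $x_f^\star$ and no crossing point, so the upper envelope $\max_{f\in E}\cost{a(x)}(f)$ is affine with slope $\pm1$ there, and moving from $s$ to the neighbouring candidate in the descending direction strictly lowers it. The difference is only in presentation: you get the common slope of the maximizing functions pointwise (via the crossing-point contradiction) and then extend it by connectedness, whereas the paper asserts directly that the set of maximizing edges is constant on the whole open interval between consecutive candidates and then takes whichever of the two endpoints gives the lower value.
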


\begin{claimproof}
Let $s=a(z)$, and consider two points $a(x_1),a(x_2)\in\calI(e)$ with $x_1<z<x_2$, such that no point of~$\calI(e)$ lies between $a(x_1)$ and $a(x_2)$. Such points must exist because $\calI(e)$ contains the endpoints of~$e$.

By construction of~$\calI(e)$, the set of edges that maximize $\cost{a(x)}(f)$ is the same for all $x\in(x_1,x_2)$. Let $\calE$ denote that set, i.e., $\calE=\argmax_{f\in E} \cost{a(x)}(f)$ for every 
$x\in(x_1,x_2)$. Recall that the cost functions change slope only at points $\bigl\{a(x_{f}^\star) : f\in E \bigr\} \subseteq \calI(e)$, and that none of these points can be between $a(x_1)$ and $a(x_2)$. Therefore, the functions in~$\bigl\{\cost{a(\cdot)}(f) : f\in 
\calE\bigr\}$ are either all increasing at a slope of~$1$ or all decreasing at a slope of~$-1$ in the interval~$(x_1,x_2)$. 
It follows that 
we can choose $z'\in\{x_1,x_2\}$ and have $\max_{f\in E} \cost{a(z')}(f) < \max_{f\in E} \cost{a(z)}(f)$.
\end{claimproof}

By the above claim and by Lemma~\ref{prop:cont:onesource}, there must exist an optimal solution whose support is a single point from the set~$\bigcup_{e\in E} \calI(e)$.
\end{proof}

\section{A \texorpdfstring{$\tfrac{4}{3}$}{4/3}-approximation algorithm for~$\probnameshort$}\label{ref:approx} \label{sec:approx}

We analyze the algorithm that exhaustively searches for the best solution with support of size~$1$. The algorithm selects a node $\tilde{v} \in 
\argmin_{v\in V} \ \max_{e \in E} (\cost{v}(e))$,  that is, 
the node requiring the minimum power (ties broken arbitrarily) to cover all edges on its own, and assigns to it power $\max_{e \in  E} \cost{\tilde{v}}(e)$. The power of every other node is set to~$0$.
For the remainder of this section, we consider a fixed instance $\langle G,w\rangle$ with $G=(V,E)$, and we adopt the following notation:
\begin{itemize}
 \item $\power^\star$ denotes an optimal power assignment for~$\probnameshort$ on~$\langle G,w\rangle$.
 \item $\OPT = \sum_{v\in V} \power^\star(v)$.
 \item $\power$ denotes the power assignment produced by the above algorithm on~$\langle G,w\rangle$.
 \item $\power_c$ is an optimal power assignment with support of size~$1$ for~$\probnamecontshort$ on~$\langle G,w\rangle$.
 \item $\hat{a}=(\hat{u}, \hat{e}, \hat{x})$ is the single 
point in the support of~$\power_c$, where, without loss of generality, we assume that $\hat{x}\leq \frac{1}{2} 
w(\hat{e})$.
\end{itemize}

To illustrate the $\tfrac{4}{3}$-approximation algorithm, consider the path drawn in Figure~\ref{ex:toto}. For the problem $\probnameshort$, 
an optimal solution assigns power only to vertices $b$ and $v$: $\power^\star(b)=1$ and $\power^\star(v)=2$, which covers all edges
($b$ covers $\edge{a}{b}$ and $\edge{b}{u}$, and $v$ covers the remaining ones), hence $ \OPT =3$. 
For the continuous variant $\probnamecontshort$, an optimal solution is attained at the single point $\alpha=(u,\edge{u}{v},1)$ with $\power_c(\alpha)=3$. 

The $\tfrac{4}{3}$-approximation algorithm computes $M(v)=\max_{e\in E}\cost{v}(e)$ for each vertex $v$. 
In this example, $M(a)=M(z)=6$, $M(v)=M(u)=4$, and $M(b)=5$. One possible output of the algorithm is to assign all power to the single vertex $u=\tilde{v}$, i.e., $\power(u)=4$.

\begin{figure}[h]
    \centering
\begin{tikzpicture}[scale=1]
\node[draw,circle,minimum size=5mm,inner sep=0pt] (a) at (0,0) {$a$};
\node[draw,circle,minimum size=5mm,inner sep=1pt,thick] (b) at (1.5,0) {$b$};
\node[draw,circle,minimum size=5mm,inner sep=0pt] (x) at (3,0) {$u$};
\node[draw,circle,minimum size=5mm,inner sep=1pt,thick] (y) at (4.5,0) {$v$};
\node[draw,circle,minimum size=5mm,inner sep=0pt] (z) at (6,0) {$z$};

\node (b1) at (1.5,0.7) {\small $\power(b)=1$};
\node (v1) at (4.5,0.7) {\small $\power(v)=2$};
 
\draw (a) -- (b) node[midway, above] {$1$};
\draw (b) -- (x) node[midway, above] {$1$};
\draw (x) -- (y) node[midway, above] {$2$};
\draw (y) -- (z) node[midway, above] {$2$};
\node (a1) at (3.75,-0.3) {\small \textcolor{white}{$\alpha$}};

\end{tikzpicture}\hfill
\begin{tikzpicture}[scale=1]
\node[draw,circle,minimum size=5mm,inner sep=0pt] (a) at (0,0) {$a$};
\node[draw,circle,minimum size=5mm,inner sep=0pt] (b) at (1.5,0) {$b$};
\node[draw,circle,minimum size=5mm,inner sep=0pt] (x) at (3,0) {$u$};
\node[draw,circle,minimum size=5mm,inner sep=0pt] (y) at (4.5,0) {$v$};
\node[draw,circle,minimum size=5mm,inner sep=0pt] (z) at (6,0) {$z$};
\node (v1) at (3.75,0.7) {\small $\power_c(\alpha)=3$};
\node[draw,fill,circle,minimum size=2mm,inner sep=0pt,color=black] (a1) at (3.75,0) { };
\node (a1) at (3.75,-0.3) {\small $\alpha$};
\draw (a) -- (b) node[midway, above] {$1$};
\draw (b) -- (x) node[midway, above] {$1$};
\draw (x) -- (y) node[midway, above] {$2$};
\draw (y) -- (z) node[midway, above] {$2$};

\end{tikzpicture}
\caption{Path of $4$ vertices. On the left: an optimal solution for $\probnameshort$, placing power on $b$ and $v$ ($\power^\star(b)=1$, $\power^\star(v)=2$). 
On the right: an optimal solution for the continuous variant $\probnamecontshort$, concentrated at an interior point $\alpha\in(u,v)$ with $\power_c(\alpha)=3$. }
\label{ex:toto}
\end{figure}



\begin{theorem}
The algorithm that selects a node
$\tilde{v} \in \argmin_{v\in V} \ \max_{e \in E} (\cost{v}(e))$
is a polynomial-time $\tfrac{4}{3}$-approximation algorithm for $\probnameshort$:
$\power(\tilde v)\le \tfrac{4}{3}\,\OPT.$
\end{theorem}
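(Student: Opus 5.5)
The plan is to sandwich the algorithm's cost between the continuous optimum and the discrete optimum, and to lose at most the offset $\hat{x}$ of the continuous optimum inside its edge.

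\emph{Step 1: cheap bounds.} First, every discrete solution is a continuous one (place power on nodes, i.e.\ triples with $x=0$). Hence $\power_c(\hat a)\le \OPT$. Second, moving the single source from $\hat a$ to the endpoint $\hat u$ of $\hat e$ changes every covering cost by at most $\hat x$. For $f\neq\hat e$ this follows from $\cost{\hat u}(f)\le \hat x+\cost{\hat a}(f)$. For $f=\hat e$ we have $\cost{\hat u}(\hat e)=w(\hat e)=\hat x+(w(\hat e)-\hat x)=\hat x+\cost{\hat a}(\hat e)$, using $\hat x\le \tfrac12 w(\hat e)$. Since $\tilde v$ minimizes $\max_e\cost{v}(e)$, this gives
\[
\power(\tilde v)\le \max_{f\in E}\cost{\hat u}(f)\le \power_c(\hat a)+\hat x\le \OPT+\hat x .
\]
So it suffices to show $\hat x\le \tfrac13\OPT$, or to handle separately the cases where this fails.

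\emph{Step 2: lower bound $\OPT\ge 3\hat x$ via Lemma~\ref{lem:diamlb}.} Fix a node $u$ that covers $\hat e$ under $\power^\star$. If $u$ alone covers all edges, then $\power^\star$ has support $\{u\}$ and $\power(\tilde v)\le\power^\star(u)=\OPT$, so we are done. Otherwise, $\power^\star(u)\ge w(\hat e)\ge 2\hat x$, and I apply Lemma~\ref{lem:diamlb} to $u$, which gives $\OPT\ge \power^\star(u)+\tfrac12 w'(q)$ for any shortest path $q$ in a component of the expanded graph of $G\setminus E_u$. It then remains to exhibit such a $q$ with $w'(q)\ge 2\hat x$; combined with Step 1, this yields $\power(\tilde v)\le \OPT+\hat x\le \tfrac43\OPT$.

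The source of such a long path is the optimality of $\hat a$ in the continuous problem. Since $\hat a$ is not improvable by sliding along $\hat e$, as in the proof of Theorem~\ref{prop:cont:solve}, there are maximizing edges whose costs increase in each direction of $\hat e$ at $\hat a$. Hence there is a critical edge $f_1$ reached ``through $\hat u$'' and a critical edge $f_2$ reached ``through $\hat v$'' (or $\hat e$ itself), both at cost exactly $\power_c(\hat a)$. I would then compare these with $u$. Either $u$ is cheap relative to $\power_c(\hat a)$, in which case some edge far from $u$ is not in $E_u$ and lies at distance roughly $\ge 2\hat x$ from the other edges not covered by $u$. Or $\power^\star(u)$ itself is already at least $\power_c(\hat a)-\hat x+{}$something, giving $\OPT\ge\power^\star(u)\ge \tfrac34(\power_c(\hat a)+\hat x)$ directly. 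The second alternative is a direct bound; I would also use the trivial fallback $\OPT\ge\power_c(\hat a)$ when $\hat x$ is small.

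\emph{Main obstacle.} The hard part is this last dichotomy: producing, from the local optimality of $\hat a$, an explicit pair of vertices in one component of $G\setminus E_u$ whose expanded-graph distance is at least $2\hat x$. The half-edge vertices $h_e,h'_e$ are what make endpoints of edges count fully. My first attempt would be to take the endpoints of the edges $f_1,f_2$ above. I would argue that, if both were in $E_u$, then $u$ would be a single point covering everything critical with power below $\power_c(\hat a)+\hat x$. Interpolating along a shortest path from $u$ to $\hat e$ would then contradict the choice of $\hat a$. If instead they are not in $E_u$, their distance in $G'$ is at least $2\hat x$ because of the slopes at $\hat a$.
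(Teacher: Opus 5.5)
Your Step~1 and your two lower bounds are correct. The gap is that $\power(\tilde v)\le\power_c(\hat a)+\hat x$ is the only upper bound you ever use. Everything after Step~1 therefore amounts to proving $\power_c(\hat a)+\hat x\le\tfrac43\OPT$ whenever the optimum is not a single vertex, and that inequality is false.

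Take the triangle $b,s,c$ with $w(\edge{b}{c})=100$ and $w(\edge{b}{s})=w(\edge{s}{c})=40$. Attach pendant paths $c,k_0,k_1,k_2$ and $b,j_0,j_1,j_2$, each with weights $20,1,1$. Then:
\begin{itemize}
\item $d(b,c)=80$, and $\OPT=101$, attained for example by $\power^\star(b)=100$, $\power^\star(k_1)=1$.
\item Every single-vertex solution costs at least $102$, so every optimum has support~$2$ and your support-$1$ exit does not apply.
\item The unique continuous optimum is the midpoint of $\edge{b}{c}$, with $\power_c(\hat a)=90$ and $\hat x=50$.
\end{itemize}
Hence $\power_c(\hat a)+\hat x=140>\tfrac43\cdot 101$, and each branch of your plan fails on this instance:
\begin{itemize}
\item $\hat x=50>\tfrac13\OPT$.
\item For $u=b$, the only nontrivial component of $G\setminus E_u$ is the path $k_0k_1k_2$ (symmetrically $j_0j_1j_2$ for $u=c$), so $w'(q)\le 2<2\hat x$.
\item $\power^\star(u)=100<\tfrac34\cdot 140$.
\item The ``interpolation'' step gives no contradiction. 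Having $u$ cover the critical edges with power below $\power_c(\hat a)+\hat x$ is fully compatible with $\hat a$ being optimal: as a continuous point, $b$ needs $102>90$.
\end{itemize}
The slack is lost in the step $\max_f\cost{\hat u}(f)\le\power_c(\hat a)+\hat x$ (here $102$ versus $140$). No lower bound on $\OPT$ can recover it.

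The missing idea is a second upper bound on $\power(\tilde v)$ that does not pass through $\hat u$. Let $q$ be a longest shortest path over all components of the expanded graph $G'$ of $G\setminus E_u$. Then $u$ alone, with power $\power^\star(u)+w'(q)$, covers every edge:
\begin{itemize}
\item Edges of $E_u$ need at most $\power^\star(u)$ from $u$.
\item For $e\notin E_u$, connectivity of $G$ gives a vertex $z$ in the component of $G'$ containing $h'_e$ with $d(u,z)\le\power^\star(u)$. This $z$ is either $u$ itself or an endpoint of an edge of $E_u$. Moreover $\cost{z}(e)\le d_{G'}(z,h'_e)\le w'(q)$.
\end{itemize}
With this bound, the case you were trying to exclude, $w'(q)\le 2\hat x$, becomes immediate. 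There $w'(q)\le 2\hat x\le\power^\star(u)$. Combined with $\OPT\ge\power^\star(u)+\tfrac12 w'(q)$ from Lemma~\ref{lem:diamlb}, the ratio is at most $\bigl(\power^\star(u)+w'(q)\bigr)/\bigl(\power^\star(u)+\tfrac12 w'(q)\bigr)\le\tfrac43$.

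In the other two cases your Step~1 bound suffices:
\begin{itemize}
\item if $\hat x\le\tfrac13\power_c(\hat a)$, use $\OPT\ge\power_c(\hat a)$;
\item if $\tfrac13\power_c(\hat a)\le\hat x\le\tfrac12 w'(q)$, use $\OPT\ge 2\hat x+\tfrac12 w'(q)$.
\end{itemize}
This is exactly the paper's three-case argument. No long path has to be extracted from the local optimality of $\hat a$.
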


\begin{proof}

Let $v^\star$ be the node that covers edge~$\hat{e}$ under the optimal power assignment~$\power^\star$. Then, we  have $\power^\star(v^\star)\geq w(\hat{e}) \geq 2\hat{x}$.
Let $E^\star$ be the set of edges of~$G$ that are covered only by~$v^\star$ under~$\power^\star$. Let $\langle G',w'\rangle$ denote the expanded weighted graph corresponding to $\langle G\setminus E^\star,w\rangle$, and let $\langle G_{v^\star},w'\rangle$ be the connected component of~$\langle G',w'\rangle$ that contains the shortest path~$q_{v^\star}$ of maximum (finite) weight.


The proof is based on the following two claims, which provide lower bounds for $\OPT$ and upper bounds for the cost of~$\power$.

\begin{claim}
  $\OPT\geq \max\bigl\{ \power_c(\hat{a}),\, \power^\star(v^\star) + \frac{1}{2} w'(q_{v^\star}) \bigr\}$. 
\end{claim}

\begin{claimproof}
The first lower bound follows from the fact that every feasible solution of $\probnameshort$ is also a feasible solution of~$\probnamecontshort$ 
with the same cost. The second lower bound follows from Lemma~\ref{lem:diamlb}.
\end{claimproof}

\begin{claim}
 $\power(\tilde{v}) \leq \min\bigl\{ \power_c(\hat{a}) + \hat{x},\, \power^\star(v^\star) + w'(q_{v^\star}) \bigr\}$.
\end{claim}

\begin{claimproof}
 For the first upper bound, note that a power assignment $\power'$ with $\power'(\hat{u}) = \power_c(\hat{a})+\hat{x}$ and $\power'(v)=0$ for $v\neq\hat{u}$ is a 
solution with support of size~$1$ for $\probnameshort$ 
on~$\langle G,w\rangle$, as it covers all edges that were covered under~$\power_c$. Therefore,
\[
\power(\tilde{v})\leq \sum_{v\in V} \power'(v) = \power'(\hat{u}) = 
\power_c(\hat{a}) + \hat{x} \enspace.
\]

For the second upper bound, it suffices to prove that every edge~$e$ of~$G$ can be covered from~$v^\star$ with power at most~$\power^\star(v^\star) + 
w'(q_{v^\star})$. Let~$e$ be an arbitrary edge of~$G$. If $e\in E^\star$, then clearly $e$ can be covered from~$v^\star$ with power at 
most~$\power^\star(v^\star)$. Otherwise, the node~$h'_e$ must be in one of the components of~$\langle G',w'\rangle$. Recall that $h'_e$ is the node of the expanded graph associated with~$e$ (Definition~\ref{def-expanded-graph}). Since $G$ is connected, there must 
exist a node~$z\in V$ such that $z$ is at distance at most~$\power^\star(v^\star)$ from $v^\star$ and $z$ is in the same component of~$\langle 
G',w'\rangle$ as $h'_e$. By definition of~$q_{v^\star}$, the distance between $z$ and $h'_e$ in~$G'$ is at most~$w'(q_{v^\star})$, 
therefore
\ the edge~$e$ in~$G$ can be covered from~$z$ with 
power at most~$w'(q_{v^\star})$. It follows that $e$ can be covered from $v^\star$ with power at most~$\power^\star(v^\star) + 
w'(q_{v^\star})$.
\end{claimproof}

We now distinguish three cases, depending on whether the rounding loss~$\hat{x}$ is small compared to~$\power_c(\hat{a})$ (Case~1) or to~$\frac{1}{2}w'(q_{v^\star})$ (Case~2), or large compared to both (Case~3):

\noindent\emph{Case~1}: If $\hat{x}\leq \frac{1}{3} \power_c(\hat{a})$, then
\[
 \frac{\power(\tilde{v})}{\OPT} \leq \frac{\power_c(\hat{a}) + \hat{x}}{\power_c(\hat{a})} \leq \frac{\power_c(\hat{a}) + \frac{1}{3}\power_c(\hat{a})}{\power_c(\hat{a})} = \frac{4}{3} 
\enspace.
\]

\noindent\emph{Case~2}: If $\frac{1}{3} \power_c(\hat{a}) \leq \hat{x} \leq \frac{1}{2} w'(q_{v^\star})$, then, recalling that $\power^\star(v^\star)\geq 
2\hat{x}$, we have:
\[
 \frac{\power(\tilde{v})}{\OPT} \leq \frac{\power_c(\hat{a}) + \hat{x}}{\power^\star(v^\star) + \frac{1}{2} w'(q_{v^\star})} \leq \frac{\power_c(\hat{a}) + 
\hat{x}}{2\hat{x} + \frac{1}{2} w'(q_{v^\star})} \leq \frac{3\hat{x} + \hat{x}}{2\hat{x} + \hat{x}} = \frac{4}{3} \enspace.
\]

\noindent\emph{Case~3}: If $\hat{x} \geq \frac{1}{2} w'(q_{v^\star})$, then $\power^\star(v^\star) \geq 2\hat{x} \geq w'(q_{v^\star})$ and we have:
\[
 \frac{\power(\tilde{v})}{\OPT} \leq \frac{\power^\star(v^\star) + w'(q_{v^\star})}{\power^\star(v^\star) + \frac{1}{2} w'(q_{v^\star})} \leq \frac{w'(q_{v^\star}) 
+ 
w'(q_{v^\star})}{w'(q_{v^\star}) + \frac{1}{2} w'(q_{v^\star})} = \frac{4}{3} \enspace.
\qedhere\]
\end{proof}

Observe that the approximation ratio is tight: the instance depicted in Figure~\ref{ex:toto} reaches
$\power(\tilde{v})=\tfrac{4}{3}\,\OPT.$

\section{Polynomial-time solvable cases}\label{sec:polynomial}

In this section, we provide polynomial-time algorithms for two specific families of weighted graphs: complete graphs and trees.

\subsection{Complete graphs}
We first show that in a complete graph, some optimal solution is supported on a single vertex.

\begin{lemmarep}~\label{lem:completegraph}
In the complete graph $K_n$ with $n\geq 2$ and arbitrary edge weights, there exists an optimal solution whose support has size $1$.
\end{lemmarep}

\begin{proof}
Let $\power: V\to\mathbb{N}$ be an optimal solution with minimum support, and let $S$ denote the support of $\power$. For the sake of contradiction, assume that $|S|\geq 2$.
Let $u$ and $v$ be two distinct vertices in $S$.

 If $\max\bigl(\power(u),\power(v)\bigr) \geq w(\edge{u}{v})$, then Corollary~\ref{cor:mergesources} applies to $u$, $v$, and the path consisting of the single edge $\edge{u}{v}$. Hence, there exists an optimal solution whose support is strictly smaller than $|S|$.

If 
$\max\bigl(\power(u),\power(v)\bigr) < w\bigl(\edge{u}{v}\bigr)$, then the edge $\edge{u}{v}$ must be covered by some node~$z\notin\{u,v\}$ with $\power(z)\geq 
\min\bigl(d(z,u),d(z,v)\bigr)+w\bigl(\edge{u}{v}\bigr)$. 
Assuming, without loss of generality, that  $d(z,u)\leq d(z,v) $, it follows that  $\power(z)\geq 
d(z,u)+w\bigl(\edge{u}{v}\bigr) \geq d(z,u)$. Therefore, Corollary~\ref{cor:mergesources} applies to~$u$, $z$, and any shortest path between~$u$ and~$z$, and hence, there exists an optimal solution whose support is strictly smaller than $|S|$.

In both cases, we obtain an optimal solution with a support size smaller than $|S|$, a contradiction.
\end{proof}

As a direct consequence of Lemma~\ref{lem:completegraph}, the optimal solution in complete graphs can be computed in $\calO(n^3)$ time by trying every node and selecting one that needs the minimum power to cover all edges: all-pairs shortest paths take $\calO(n^3)$ time, and each of the $n$ candidates is then evaluated in $\calO(n^2)$ time.

 \begin{corollary}
In the complete graph $K_n$ with $n\geq 2$ and arbitrary edge weights, an optimal solution is obtained by selecting 
$v^\star \in \argmin_v \max_e \cost{v}(e)$, assigning $\power(v^\star) = \max_{e \in E} \cost{v^\star}(e)$, and setting $\power(v)=0$ for all $v\neq v^\star$. This solution can be computed in $\calO(n^3)$ time.

\end{corollary}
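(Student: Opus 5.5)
The statement to prove is the corollary: in $K_n$ there is an optimal single-vertex solution, and it can be found in $\calO(n^3)$ time. The plan has two parts: correctness via Lemma~\ref{lem:completegraph}, then the running time.

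For correctness, Lemma~\ref{lem:completegraph} gives an optimal solution $\power^\circ$ whose support is a single vertex $s$. Since every edge must be covered and only $s$ has positive power, $\power^\circ(s)\geq \max_{e\in E}\cost{s}(e)$, so
\[
\OPT=\power^\circ(s)\geq \max_{e\in E}\cost{s}(e)\geq \min_{v\in V}\max_{e\in E}\cost{v}(e)=\max_{e\in E}\cost{v^\star}(e).
\]
Conversely, the assignment that gives $\max_{e\in E}\cost{v^\star}(e)$ to $v^\star$ and $0$ to every other vertex is feasible. By definition of $\cost{v^\star}$, $v^\star$ covers every edge with this power. Its cost is exactly the lower bound above, so it is optimal. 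The values are integers because distances and weights are integral, so the solution is a valid power assignment into $\bbN$.

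For the running time, compute all-pairs shortest paths with Floyd--Warshall in $\calO(n^3)$ time. For each vertex $v$ and each edge $\edge{x}{y}$, the quantity $\cost{v}(\edge{x}{y})=\min(d(v,x),d(v,y))+w(\edge{x}{y})$ is then available in $\calO(1)$ time. Evaluating $\max_{e\in E}\cost{v}(e)$ for one vertex takes $\calO(|E|)=\calO(n^2)$ time, so all $n$ candidates take $\calO(n^3)$ time. Taking the minimum over $v$ adds only $\calO(n)$.

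There is no real obstacle, since the structural work is done in Lemma~\ref{lem:completegraph}. The only point needing care is the first inequality: a single-vertex support must carry at least the maximum covering cost, which is what makes the argmin choice exactly optimal.
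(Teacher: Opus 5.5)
Your proposal is correct and follows the paper's own argument. The paper also derives the corollary directly from Lemma~\ref{lem:completegraph} by trying every vertex as the sole powered vertex, with $\calO(n^3)$ time for all-pairs shortest paths and $\calO(n^2)$ time per candidate; your explicit inequality showing that the argmin vertex attains the single-support optimum spells out a step the paper leaves implicit.
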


\subsection{Trees}

A naive algorithm would be to choose $v^\star \in \argmin_v \max_e \cost{v}(e)$, that is,  the node that requires the least power
 to cover all edges on its own (breaking ties arbitrarily), and assign it power equal to $\max_{e \in E} \cost{v^\star}(e)$. All other vertices would then receive power~$0$. 
However, this strategy is not always optimal. For example, consider a path with five edges whose weights, from left to right, are $a, b, c, b, a$, where $c \geq a+b$. Every single-support solution must deploy a power of at least~$c+a+b$, whereas a solution with total cost $c+\max\{a,b\}$ is available if we allow two nodes in the support.


In~\cite{Kolen}, the author considers the $\mincover$ ($\mincovershort$) problem, which is defined as follows: We are given a tree $T$ in which each edge has a positive weight, as well as a set of clients and facilities.   There is a client at every vertex of $T$, and we also assume that the facility locations are vertices of $T$.
Let $s_j$, $j=1,2,\ldots ,F$, denote the possible facility locations. A facility located at $s_j$ can serve only the clients located within distance $r_j\geq 0$ from $s_j$.
The cost of establishing a facility at $s_j$ is given by a constant $c_j$,   $c_j>0$. Note that several facilities may be located at the same vertex with different costs, and consequently different serving ranges, i.e., different sets of clients.
We say that facility $j$ is \emph{open} if there is a facility established at $s_j$.
The objective is to select a set of facilities to open such that all clients are served at minimum total cost. The problem can be solved in time $\calO(nF)$ where $n$ is the number of clients (i.e., number of vertices of $T$) and $F$ is the number of facilities~\cite{Kolen}. 







\begin{theoremrep} \label{theo-tree}
The $\probnameshort$ problem is solvable in $\calO(n^3)$ time on trees with $n$ vertices.
\end{theoremrep}

\begin{proofsketch}
    The idea is to show a polynomial-time reduction  from  the $\probnameshort$ problem on $T$ to the $\mincover$ ($\mincovershort$) problem on a tree $T'=(V',E')$ with additional vertices for each edge that we need to cover. Finally we use the algorithm in~\cite{Kolen} for the case of trees.
\end{proofsketch}

\begin{proof}
Let $T=(V,E)$ be a tree. We show a (polynomial-time) reduction  from  the $\probnameshort$ problem on $T$  to  the $\mincover$ ($\mincovershort$) problem on a tree $T'=(V',E')$.


Note that in the $\probnameshort$ problem the goal is to cover all edges, whereas in the minimum cost covering problem the goal is to cover all vertices. The main idea of the reduction is to create additional vertices in $T'$ so that covering all vertices in the minimum cost covering problem on $T'$ implies that all edges in $T$ are covered.

Specifically, we construct a tree $T'=(V',E')$ as follows:
\begin{itemize}
    \item for each vertex $v\in V$, we create a corresponding vertex $v$ in $V'$
    \item for each edge $e=\edge{x}{y}\in E$ of weight $w(e)$, we create two additional vertices $v_e$ and $v'_e$ in $V'$, and connect them with edges $\edge{x}{v_e}$, $\edge{v_e}{y}$ and $\edge{v_e}{v'_e}$ each one of weight $w(e)/2$ in $E'$. 
\end{itemize}


Once the tree $T'$ has been constructed, we define the set of facilities together with the corresponding vertices that each facility can cover.

First, observe that for each vertex $v \in V$, its power $\power(v)$ in any optimal solution must belong to the set of values ${\cal P}_v = \{\cost{v}(e) : e\in E\} = \{p_{v,1},\ldots ,p_{v,n_v}\}$, where $n_v$ denotes the number of distinct power values that $v$ can take. Note that $n_v \leq m$, since there are at most $m$ distinct distances between $v$ and the other vertices of the graph. 
We assume those values are sorted in increasing order, i.e., $p_{v,1}<\ldots < p_{v,n_v}$. A BFS algorithm can be applied to generate such sets. %
For each vertex $v'\in V'$, corresponding to a vertex $v'\in V$, we create $n_{v'}$ facilities located at $v'$.  Thus, the total number of facilities is $\calO(mn) = \calO(n^2)$. The cost of establishing the $i$-th facility at $v'$ is $p_{v',i}$,
for $1\leq i\leq n_{v'}$, and it can serve only those clients located within distance $p_{v',i}$ from $v'$. 
The $i$-th facility located at $v'$, together with the clients it can serve, forms a connected subtree denoted by $T'_{v',i}$.

\begin{figure}[htbp]
    \centering
    \includegraphics[width=\linewidth]{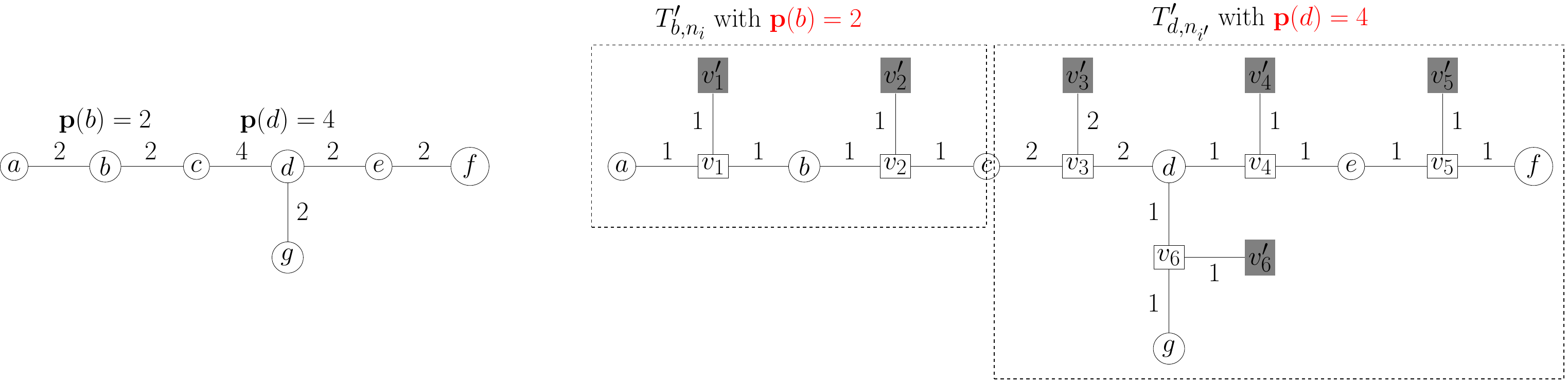}
    \caption{An illustration of the proof of Theorem~\ref{theo-tree} with ${\cal P}_b=\{2,6,8,10\}$ and ${\cal P}_d=\{2,4,6,8\}$. On the left (resp. right) we show the tree $T=(V,E)$ (resp. $T'=(V',E')$). Newly added vertices are drawn as rectangles, and the vertices $v'_i$ corresponding to edges $e \in E$ are shaded in gray. In both cases, the optimal solution has cost~$6$.
    } 
    \label{fig:enter-label}
\end{figure}




According to the algorithm in~\cite{Kolen}, the $\mincover$ ($\mincovershort$) problem on $T'$ can be solved in $\calO(nF)$ time, where $F$ is the number of facilities (with $F = \calO(n^2)$ in our case). Hence, the overall time complexity is $\calO(n^3)$.
We now show that the optimal solutions of the two problems are related.

$(\Rightarrow)$
Let us consider an optimal solution $s$ for the $\probnameshort$ problem on $T=(V,E)$. For each vertex $v\in V$ in the support of $s$, we denote by $\power^\star(v)$ its power. To construct a solution for $\mincovershort$, for each vertex $v'$, a facility is opened at $v'$ with power $\power^\star(v')$. Observe that $\power^\star(v')$ must necessarily coincide with one of the values $p_{v',i}$ for $1 \leq i \leq n_{v'}$; otherwise, the assigned power could be reduced, which would contradict the optimality of $s$.


Suppose an edge $e=\edge{x}{y}\in E$ is covered by a vertex $v$ with  power $\power(v)$. By definition, we have $\power(v)\geq \min\{d(v,x),d(v,y)\}+w(e)$. Without loss of generality,  assume that $\power(v)\geq d(v,x)+w(e)$. 
Since the edge $e$ can be covered from $x$ in $\probnameshort$, it follows that the vertices $v_{e}$, $v'_{e}$, $y$ can be covered from  $x$ in $\mincovershort$.
This holds because the distance from $x$ to each of $v_{e}$,  $v'_{e}$, and $y$ is at most $w(e)$.
Applying this argument to every edge shows that if an edge $e=\edge{x}{y}$ is covered in $\probnameshort$, then the corresponding vertices $v_{e}, v'_{e}, x,$ and $y$ are also covered in $\mincovershort$.
Observe that both solutions have the same cost. 

$(\Leftarrow)$ 
Let us consider a solution $s'$ to the $\mincovershort$ problem. We show how to transform it into a feasible solution to the $\probnameshort$ problem on $T=(V,E)$ with the same cost.
We focus on a facility opened at vertex $z$ in $s'$, with cost $p_{z,j}$ in $T'$. By construction of the reduction, node $z$ also has a corresponding counterpart in $T$.

Let $T'_{z,j}$ denote the subtree of $T'$ rooted at $z$, consisting of the facility located at $z$ together with the clients it can serve within cost $p_{z,j}$.

We focus on the edges in $T'_{z,j}$ and distinguish two cases:
\begin{itemize}
    \item  If $v'_i\in T'_{z,j}$ is a leaf, then  all  vertices $x$, $y$, $v_i$, and $v'_i$ are also covered. By construction, we know that $d(z,v'_i)\leq p_{z,j}$.
    Without loss of generality, assume that $d(z,x)<p_{z,j}$.  Then, we have $d(z,x)+d(x,v_i)+d(v_i,v'_i) = d(z,x)+d(x,v_i)+d(v_i,y) \leq p_{z,j}$.  Hence, the edge $\edge{x}{y}$ is covered by $z$ in $T$ using power $p_{z,j}$.
\item If $v_i \in T'_{z,j}$ is a leaf, then there exists another tree $T'_{z',j'}$ that covers vertex $v'_i$, and consequently also the vertices $x$, $y$, $v_i$, and $v'_i$, where $\edge{x}{y}\in E$.

\end{itemize}

\textcolor{black}{We denote by $k_{\power(v)}$ the index such that $p_{v,k_{\power(v)}}=\power(v)$.} 
From the above analysis, each vertex $v'_i$ is covered by some tree $T'_{z,j}$. 
Thus, by assigning power $\power(z)=p_{z,k_{\power(z)}}$ to every vertex $z\in V$ with $(z,p_{z,k_{\power(z)}})\in s'$, 
the corresponding edge $\edge{x}{y}\in E$ associated with $v'_i$ is also covered. 
Hence, we obtain a feasible solution for the $\probnameshort$ problem on $T=(V,E)$ with the same cost  $\sum_{z,\,(p_{z,k_{\power(z)}})\in s'} p_{z,k_{\power(z)}}$.
\end{proof}

\section{Conclusion}
We introduced the $\probname$ ($\probnameshort$) problem, established its $\NP$-hardness, derived general lower bounds via a distance-preserving expanded graph (including a $\diam(G)/2$ bound), gave a polynomial-time algorithm for the continuous variant with a support-one optimum, designed a $\tfrac{4}{3}$-approximation algorithm for the discrete variant, and identified polynomial-time solvable cases (complete graphs and trees via a reduction to the $\mincover$ problem). We highlight that our $\tfrac{4}{3}$-approximation algorithm concentrates all power on a single vertex.  This single-support structure is specific to unit weights, to the continuous relaxation, and to complete graphs; in general the optimum is distributed, as in the tree example of Section~\ref{sec:polynomial}. Promising directions include robust  extensions—for example, fixing the support size, or imposing robustness constraints such as requiring every edge to be covered via two internally disjoint paths or by two distinct covering vertices.

\bibliography{biblio}





\end{document}